\journal{Journal of \LaTeX\ Templates}
\DeclareMathAlphabet\mbi{OML}{cmm}{b}{it}
\DeclareSymbolFont{boldsymbols}{OMS}{cmsy}{b}{n}
\DeclareSymbolFontAlphabet{\mathbfcal}{boldsymbols}
\newtheorem{thm}{Theorem}[section]
\newtheorem{corollary}{Corollary}[section]
\theoremstyle{definition}
\theoremstyle{Remarque}
\theoremstyle{plain}
\DeclareMathOperator*{\argmax}{arg\,max}
\DeclareMathOperator*{\argmin}{arg\,min}
\begin{document}

\begin{frontmatter}

\title{Profiled deviance for the multivariate linear mixed-effects model fitting}


\author[mymainaddress,mysecondaryaddress]{Eric Houngla Adjakossa\corref{correspondingauthor}}
\cortext[correspondingauthor]{Corresponding author}
\ead{ericadjakossah@gmail.com}

\author[mysecondaryaddress]{Gregory Nuel}
\ead{Gregory.Nuel@math.cnrs.fr}

\address[mymainaddress]{International Chair in Mathematical Physics and Applications (ICMPA-UNESCO Chair) /University of Abomey-Calavi, 072 B.P. 50 Cotonou, Republic of Benin}
\address[mysecondaryaddress]{ Laboratoire de Probabilit\'es et Mod\`eles Al\'eatoires /Universit\'e Pierre et Marie Curie, Case courrier 188 - 4, Place Jussieu 75252 Paris cedex 05 France}

\begin{abstract}
This paper focuses on the multivariate linear mixed-effects model, including all the correlations between the random effects when the marginal residual terms are assumed uncorrelated and homoscedastic with possibly different standard deviations. The random effects covariance matrix is Cholesky factorized to directly estimate the variance components of these random effects. This strategy enables a consistent estimate of the random effects covariance matrix which, generally, has a poor estimate when it is grossly (or directly) estimated, using the estimating methods such as the EM algorithm. By using simulated data sets, we compare the estimates based on the present method with the EM algorithm-based estimates. We provide an illustration by using the real-life data concerning the study of the child's immune against malaria in Benin (West Africa).
\end{abstract}

\begin{keyword}
multivariate linear mixed-effects model\sep consistent estimate \sep profiled deviance
\end{keyword}

\end{frontmatter}


\section{Introduction}

Linear mixed-effects model~\citep{hartley1967maximum, laird1982random, verbeke1997linear,hedeker2006longitudinal,fitzmaurice2012applied}  has become a popular tool for analyzing univariate multilevel data which arise in many areas (biology, medicine, economy, etc), due to its flexibility to model the correlation contained in these data, and the availability of reliable and efficient software packages for fitting it~\citep{bates2014lme4, pinheiro2007linear, littell1996random, halekoh2006r}. Univariate multilevel data are referred to as observations (or measurements) of a single variable of interest on several levels (school in a village which, in turn, is in a town), while multivariate multilevel data are characterized by multiple variables of interest measured at multiple levels. Examples include exam or test scores recorded for students across time, and multiple items at a single occasion for students in more than one school. 
Multivariate extension of the (single response variable-based) linear mixed-effects model is, indeed, having increasing popularity as flexible tool for the analysis of multivariate multilevel data~\citep{sammel1999multivariate, schafer2002computational,wang2010ecm, jensen2012implementation}. 

For the linear mixed-effects model, many methods for obtaining the estimates of the fixed and the random effects have been proposed in the literature. These methods include Henderson's mixed model equations~\citep{henderson1950estimation}, approaches proposed by \cite{goldberger1962best} as well as techniques based on two-stage regression, Bayes estimation, etc. For details, see~\cite[][Section 7.4c]{searle1992variance} and~\cite{robinson1991blup}. Concerning the variance parameters estimation in linear mixed-effects model, the discussed methods in the literature include the ANOVA method for balanced data which uses the expected mean squares approach~\citep{searle1995overview, searle1971linear}. For unbalanced data, \cite{rao1971estimation} proposed the minimum norm quadratic estimation (MINQUE) method, where the resulting estimates are translation invariant under unbiased quadratic forms of the observations. \cite{lee1998generalized} gave another method of estimating variance parameters using extended quasi-likelihood, i.e. gamma-log generalized linear models. For more details on these parameters' estimation methods in the linear mixed-effects model, see the paper of~\cite{gumedze2011parameter}.
Beside all the methods cited earlier, come the Maximum Likelihood (ML) and the Restricted Maximum Likelihood (REML) methods. ML and REML methods are the most popular estimation methods in the linear mixed-effects model~\citep{lindstrom1988newton}. The main attraction of these methods is that they can handle a much wider class of variance models than simple variance components~\citep{gumedze2011parameter}.

In the multivariate linear mixed-effects model, ML and REML estimates are frequently approached through iterative schemes such as EM algorithm~\citep{meng1993maximum,dempster1977maximum,an2013latent, schafer2002computational, shah1997random}. This avoid the difficulties related to the direct calculating of the parameters' likelihood, since the random effects are not observed, without ignoring the flexible computationally of these algorithms. Despite the existence of valid theorems which show the asymptotic convergence of the sequences produced by these algorithms toward ML estimates~\citep{dempster1977maximum}, in practice this may not always work exactly as expected.

In this paper, we focus on the multivariate linear mixed-effects model, including all the correlations between the random effects while the marginal residuals are assumed independent homoscedastic with possibly different standard deviation. The class of multivariate mixed-effects models considered here assumes that the random effects and the residuals follow Gaussian distributions, and the dependent variables are continuous. In this model, our approach consists in directly calculating the likelihood of the model's parameters. This likelihood is used to obtain the ML estimates or the REML estimates through the provided REML criterion. This strategy may explain the high quality of the estimates of both fixed effects parameters and random effects' variance parameters as well as residual variance parameters. This approach may be viewed as a generalization of the approach proposed by~\cite{bates2014lme4} under the R software~\citep{r2015} package named lme4.

\section{Multivariate linear mixed-effects model}
For the sake of simplicity we focus on the bivariate case ($d=2$) in most of the paper, but the generalization to higher dimensions ($d>2$) is straightforward. Thus, in dimension 2, the model is the following:

\begin{eqnarray}
y_1=X_1\beta_1+Z_1\gamma_1+\varepsilon_1,\nonumber\\
y_2=X_2\beta_2+Z_2\gamma_2+\varepsilon_2,
\end{eqnarray}
where

\begin{equation}\label{eq:Modelassumptions}
\boldsymbol{\gamma}=
\begin{pmatrix}
\gamma_1\\\gamma_2
\end{pmatrix}
\sim
\mathcal{N}
\left(
\boldsymbol{0}
,
\boldsymbol{\Gamma}=
\begin{pmatrix}
\Gamma_1&\Gamma_{12}\\
\Gamma_{12}^\top&\Gamma_2
\end{pmatrix}
\right)
,
\quad
\boldsymbol{\varepsilon}=
\begin{pmatrix}
\varepsilon_1\\\varepsilon_2
\end{pmatrix}
\sim
\left(
\boldsymbol{0}
,
\begin{pmatrix}
\sigma_1^2\text{I}_{N}&0\\
0&\sigma_2^2\text{I}_{N}
\end{pmatrix}
\right).
\end{equation}

For the sake of simplicity, we write $\boldsymbol{\gamma}\sim\mathcal{N}(\boldsymbol{0},\boldsymbol{\Gamma})$ to mean that $\boldsymbol{\gamma}$ is a realization of a random vector which is $\mathcal{N}(\boldsymbol{0},\boldsymbol{\Gamma})$ distributed.
 For $k\in \{1,2\}$, $\beta_k\text{ and }\gamma_k$ denote respectively the fixed effects and the random effects vector of covariates, while $\varepsilon_k$ is the marginal residual component in the dimension $k$ of the model. $X_k$ is a matrix of covariates and $Z_k$ a covariates-based matrix of design. dim($X_k$)$=N\times p_k$ and dim($Z_k$)$=N\times q_k$, where $N$ is the total number of observations. $p_k$ and $q_k$ are, respectively, the number of fixed effect related covariates and the number of random effect related covariates in the dimension $k$ of the model. $\boldsymbol{y}=(y_1^\top,y_2^\top)^\top$ is the vector of marginal observed response variables of the model. We assume that $\boldsymbol{y}$ is a realization of a random vector $\boldsymbol{\mathcal{Y}}$ and belongs to $\mathbb{R}^{2N}$. The bold symbols represent parameters, or vectors, of multiple dimensions (i.e. $\Gamma_1$ concerns dimension $1$ of the model while $\boldsymbol{\Gamma}$ concerns both dimensions).

$\Gamma_1$ and $\Gamma_2$ are the variance-covariance matrices of $\gamma_1$ and $\gamma_2$, respectively. $\Gamma_1$ and $\Gamma_2$ must be, indeed, positive semidefinite. It is then convenient to express the model in terms of the relative covariance factors, $\Lambda_{\theta_1}$ and $\Lambda_{\theta_2}$, which are $q_1\times q_1$ and $q_2\times q_2$ matrices, respectively. $\Lambda_{\theta_1}$ is a block diagonal matrix. Each element in the diagonal of $\Lambda_{\theta_1}$ is a lower triangular matrix whose nonzero entries are the components of the vector $\theta_1$. That is, $\theta_1$ generates the symmetric $q_1\times q_1$ variance-covariance matrix $\Gamma_1$, according to 

\begin{equation}\label{eq:Gamma1Decompose}
\Gamma_1=\sigma_1^2\Lambda_{\theta_1}\Lambda_{\theta_1}^\top.
\end{equation}
Same as $\theta_2$ which generates $\Gamma_2$ according to 

\begin{equation}\label{eq:Gamma2Decompose}
\Gamma_2=\sigma_2^2\Lambda_{\theta_2}\Lambda_{\theta_2}^\top.
\end{equation}
In Equations~\ref{eq:Gamma1Decompose} and~\ref{eq:Gamma2Decompose}, $\sigma_1^2$ and $\sigma_2^2$ are the same marginal residual variances used in the model expression (see Equation~\ref{eq:Modelassumptions}). Using the variance-component parameters, $\theta_1$ and $\theta_2$, the marginal random effects, $\gamma_1$ and $\gamma_2$, are expressed as

\begin{equation}
\gamma_1=\Lambda_{\theta_1}u_1,
\quad
\gamma_2=\Lambda_{\theta_2}u_2,
\end{equation}
such that

\begin{equation}\label{eq:u}
\boldsymbol{u}=
\begin{pmatrix}
u_1\\u_2
\end{pmatrix}
\sim
\mathcal{N}
\left(
\boldsymbol{0}
,
\Sigma_{\boldsymbol{u}}
\right)
,
\quad
\text{ with }
\quad
\Sigma_{\boldsymbol{u}}=
\begin{pmatrix}
\sigma_1^2\text{I}_{q_1}&\sigma_1\sigma_2\boldsymbol{\rho}\\
\sigma_1\sigma_2\boldsymbol{\rho}^\top&\sigma_2^2\text{I}_{q_2}
\end{pmatrix}.
\end{equation}
In Equation~\ref{eq:u}, $\boldsymbol{\rho}$ is a block diagonal matrix and $\boldsymbol{u}$ is a realization of a random vector $\boldsymbol{\mathcal{U}}$. The diagonal elements of $\boldsymbol{\rho}$, say $\rho$, are matrices which contain the correlations between $\gamma_1$ and $\gamma_2$. For example, if $\gamma_1=(\gamma_1^\text{I},\gamma_1^\text{S})^\top$ and $\gamma_2=(\gamma_2^\text{I},\gamma_2^\text{S})^\top$, with $\text{I}=\text{Intercept}$ and $\text{S}=\text{Slope}$, 

\begin{equation}
\rho=
\begin{pmatrix}
\text{corr}(\gamma_1^\text{I},\gamma_2^\text{I})&\text{corr}(\gamma_1^\text{I},\gamma_2^\text{S})\\
\text{corr}(\gamma_1^\text{S},\gamma_2^\text{I})&\text{corr}(\gamma_1^\text{S},\gamma_2^\text{S})
\end{pmatrix}
\quad
\text{ and }
\quad
\boldsymbol{\rho}=
\text{diag}(\rho,\dots,\rho).
\end{equation}
The bivariate linear mixed-effects model is then re-expressed as:

\begin{eqnarray}\label{eq:model_rewriten1}
y_1=X_1\beta_1+Z_1\Lambda_{\theta_1}u_1+\varepsilon_1,\nonumber\\
y_2=X_2\beta_2+Z_2\Lambda_{\theta_2}u_2+\varepsilon_2,
\end{eqnarray}
with

\begin{equation}\label{eq:model_rewriten2}
\boldsymbol{u}=
\begin{pmatrix}
u_1\\u_2
\end{pmatrix}
\sim
\mathcal{N}
\left(
\boldsymbol{0}
,
\Sigma_{\boldsymbol{u}}
\right)
,
\quad
\boldsymbol{\varepsilon}=
\begin{pmatrix}
\varepsilon_1\\\varepsilon_2
\end{pmatrix}
\sim
\left(
\boldsymbol{0}
,
\begin{pmatrix}
\sigma_1^2\text{I}_{N}&0\\
0&\sigma_2^2\text{I}_{N}
\end{pmatrix}
\right).
\end{equation}
Then the parameters which will be estimated are $\beta_1, \beta_2, \sigma_1^2, \sigma_2^2, \theta_1$, $\theta_2$ and $\rho$.

\section{Parameters' estimates}
In this Section, we first provide the likelihood of the model's parameters and then give the REML criterion which will be optimized for the obtaining of the parameters' REML estimates.
\subsection{ML criterion}
The ML criterion is the log-likelihood of the model's parameters which is displayed through the following theorem

\begin{thm}\label{theorem1}
Suppose that $\boldsymbol{y}=(y_1^\top,y_2^\top)^\top$ satisfies the bivariate linear mixed-effects model expressed by Equations~(\ref{eq:model_rewriten1} and \ref{eq:model_rewriten2}), where $\beta_1, \beta_2, \sigma_1^2, \sigma_2^2, \theta_1$, $\theta_2$, $\rho$ are the parameters which need to be estimated, and $\boldsymbol{\beta}=(\beta_1^\top,\beta_2^\top)^\top$, $\boldsymbol{\sigma}=(\sigma_1^2,\sigma_2^2)^\top$, $\boldsymbol{\theta}=(\theta_1^\top,\theta_2^\top)^\top$. Denoting by $Y_{\boldsymbol{\sigma}}=\left(\sqrt{\sigma_2^2}y_1^\top,\sqrt{\sigma_1^2}y_2^\top\right)^\top$, $X_{\boldsymbol{\sigma}}=\begin{pmatrix}\sqrt{\sigma_2^2}X_1&\boldsymbol{0}\\\boldsymbol{0}&\sqrt{\sigma_1^2}X_2\end{pmatrix}$, $Z_{\boldsymbol{\sigma}\boldsymbol{\theta}}=
\begin{pmatrix}\sqrt{\sigma_2^2}Z_1\Lambda_{\theta_1}&\boldsymbol{0}\\\boldsymbol{0}&\sqrt{\sigma_1^2}Z_2\Lambda_{\theta_2}\end{pmatrix}$, and $\mu_{\boldsymbol{\mathcal{U}}|\boldsymbol{\mathcal{Y}}=\boldsymbol{y}}$  the conditional mean of $\boldsymbol{\mathcal{U}}$ given that $\boldsymbol{\mathcal{Y}}=\boldsymbol{y}$, the log-likelihood of $\boldsymbol{\beta}$, $\boldsymbol{\sigma}$, $\boldsymbol{\theta}$ and $\rho$ given $\boldsymbol{y}$ is expressed as

\begin{eqnarray}
\ell(\boldsymbol{\beta},\boldsymbol{\theta},\rho,\boldsymbol{\sigma}|\boldsymbol{y})
&=&-\frac{r(\widehat{\boldsymbol{\beta}}_{\boldsymbol{\theta},\rho,\boldsymbol{\sigma}},\mu_{\boldsymbol{\mathcal{U}}|\boldsymbol{\mathcal{Y}}=\boldsymbol{y}})+\left\|R_X(\boldsymbol{\beta}-\widehat{\boldsymbol{\beta}}_{\boldsymbol{\theta},\rho,\boldsymbol{\sigma}})\right\|^2}{2\sigma_1^2\sigma_2^2}-\frac{N-q}{2}\log(\sigma_1^2\sigma_2^2)\nonumber\\
&&-\frac{1}{2}\log(|\Sigma_{\boldsymbol{u}}|)-\frac{1}{2}\log(|L_{\boldsymbol{\theta},\rho,\boldsymbol{\sigma}}|^2),
\end{eqnarray}
where $q=q_1+q_2$, $\widehat{\boldsymbol{\beta}}_{\boldsymbol{\theta},\rho,\boldsymbol{\sigma}}$ and $\mu_{\boldsymbol{\mathcal{U}}|\boldsymbol{\mathcal{Y}}=\boldsymbol{y}}$ satisfy

\begin{equation}
\begin{pmatrix}
X_{\boldsymbol{\sigma}}^\top X_{\boldsymbol{\sigma}}&X_{\boldsymbol{\sigma}}^\top Z_{\boldsymbol{\sigma}\boldsymbol{\theta}}\\
Z_{\boldsymbol{\sigma}\boldsymbol{\theta}}^\top X_\sigma&Z_{\boldsymbol{\sigma}\boldsymbol{\theta}}^\top Z_{\boldsymbol{\sigma}\boldsymbol{\theta}}+\sqrt{\sigma_1^2\sigma_2^2}\Sigma_{\boldsymbol{u}}^{-1}
\end{pmatrix}
\begin{pmatrix}\widehat{\boldsymbol{\beta}}_{\boldsymbol{\theta},\rho,\boldsymbol{\sigma}}\\ \mu_{\mathcal{U}|\mathcal{Y}=\boldsymbol{y}}\end{pmatrix}
=
\begin{pmatrix}
X_{\boldsymbol{\sigma}}^\top\\Z_{\boldsymbol{\sigma}\boldsymbol{\theta}}^\top
\end{pmatrix}
Y_{\boldsymbol{\sigma}},
\end{equation}

\begin{equation}
r(\widehat{\boldsymbol{\beta}}_{\boldsymbol{\theta},\rho,\boldsymbol{\sigma}},\mu_{\boldsymbol{\mathcal{U}}|\boldsymbol{\mathcal{Y}}=\boldsymbol{y}})=\left\|Y_{\boldsymbol{\sigma}}-X_{\boldsymbol{\sigma}}\widehat{\boldsymbol{\beta}}_{\boldsymbol{\theta},\rho,\boldsymbol{\sigma}}-Z_{\boldsymbol{\sigma}\boldsymbol{\theta}}\mu_{\boldsymbol{\mathcal{U}}|\boldsymbol{\mathcal{Y}}=\boldsymbol{y}}\right\|^2 +\sigma_1^2\sigma_2^2\mu_{\boldsymbol{\mathcal{U}}|\boldsymbol{\mathcal{Y}}=\boldsymbol{y}}^\top\Sigma_{\boldsymbol{u}}^{-1}\mu_{\boldsymbol{\mathcal{U}}|\boldsymbol{\mathcal{Y}}=\boldsymbol{y}},
\end{equation}
$L_{\boldsymbol{\theta},\rho,\boldsymbol{\sigma}}$ satisfies

\begin{equation}
L_{\boldsymbol{\theta},\rho,\boldsymbol{\sigma}}L_{\boldsymbol{\theta},\rho,\boldsymbol{\sigma}}^\top=
Z_{\boldsymbol{\sigma}\boldsymbol{\theta}}^\top Z_{\boldsymbol{\sigma}\boldsymbol{\theta}}+\sqrt{\sigma_1^2\sigma_2^2}\Sigma_{\boldsymbol{u}}^{-1},
\end{equation}
and $R_X$ satisfies

\begin{equation}
\begin{pmatrix}
X_{\boldsymbol{\sigma}}^\top X_{\boldsymbol{\sigma}}&X_{\boldsymbol{\sigma}}^\top Z_{\boldsymbol{\sigma}\boldsymbol{\theta}}\\
Z_{\boldsymbol{\sigma}\boldsymbol{\theta}}^\top X_\sigma&L_{\boldsymbol{\theta},\rho,\boldsymbol{\sigma}}L_{\boldsymbol{\theta},\rho,\boldsymbol{\sigma}}^\top
\end{pmatrix}
=
\begin{pmatrix}
R_X&\boldsymbol{0}\\R_{ZX}&L_{\boldsymbol{\theta},\rho,\boldsymbol{\sigma}}^\top
\end{pmatrix}^\top
\begin{pmatrix}
R_X&\boldsymbol{0}\\R_{ZX}&L_{\boldsymbol{\theta},\rho,\boldsymbol{\sigma}}^\top
\end{pmatrix}.
\end{equation}

\end{thm}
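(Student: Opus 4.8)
\medskip
\noindent\textbf{Proof plan.} The plan is to obtain $\ell$ by integrating the unobserved random effects out of the joint Gaussian density of $(\boldsymbol{\mathcal{Y}},\boldsymbol{\mathcal{U}})$ and then rewriting the resulting $q$-dimensional Gaussian integral in the profiled/penalized-least-squares form used by \cite{bates2014lme4} in the univariate case, while keeping track of the two residual variances $\sigma_1^2,\sigma_2^2$ and of the cross-correlation block $\boldsymbol{\rho}$ (which enters only through $\Sigma_{\boldsymbol{u}}$). Concretely $e^{\ell}=\int_{\mathbb{R}^{q}}p(\boldsymbol{y}\mid\boldsymbol{u})\,p(\boldsymbol{u})\,d\boldsymbol{u}$, and the whole proof is the closed-form evaluation of this integral.

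First I would write the two densities. Since the marginal residuals are independent and, within dimension $k$, homoscedastic, $p(\boldsymbol{y}\mid\boldsymbol{u})$ factorizes over $k\in\{1,2\}$ into $\mathcal{N}\big(X_k\beta_k+Z_k\Lambda_{\theta_k}u_k,\ \sigma_k^2\mathrm{I}_N\big)$ densities, and $p(\boldsymbol{u})=(2\pi)^{-q/2}|\Sigma_{\boldsymbol{u}}|^{-1/2}\exp\!\big(-\tfrac12\boldsymbol{u}^\top\Sigma_{\boldsymbol{u}}^{-1}\boldsymbol{u}\big)$. Multiplying them, the exponent equals $-\tfrac12\big[\sigma_1^{-2}\|y_1-X_1\beta_1-Z_1\Lambda_{\theta_1}u_1\|^2+\sigma_2^{-2}\|y_2-X_2\beta_2-Z_2\Lambda_{\theta_2}u_2\|^2+\boldsymbol{u}^\top\Sigma_{\boldsymbol{u}}^{-1}\boldsymbol{u}\big]$, and pulling the common factor $1/(2\sigma_1^2\sigma_2^2)$ out of the bracket turns the two residual terms into $\|Y_{\boldsymbol{\sigma}}-X_{\boldsymbol{\sigma}}\boldsymbol{\beta}-Z_{\boldsymbol{\sigma}\boldsymbol{\theta}}\boldsymbol{u}\|^2$ (this is precisely the purpose of the $\boldsymbol{\sigma}$-weighted $Y_{\boldsymbol{\sigma}},X_{\boldsymbol{\sigma}},Z_{\boldsymbol{\sigma}\boldsymbol{\theta}}$) and the prior term into $\sigma_1^2\sigma_2^2\,\boldsymbol{u}^\top\Sigma_{\boldsymbol{u}}^{-1}\boldsymbol{u}$, so the bracket is exactly $r(\boldsymbol{\beta},\boldsymbol{u})$ with $\boldsymbol{\beta},\boldsymbol{u}$ still free. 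The constants left in front are $(2\pi)^{-N}(\sigma_1^2\sigma_2^2)^{-N/2}$ from $p(\boldsymbol{y}\mid\boldsymbol{u})$ and $(2\pi)^{-q/2}|\Sigma_{\boldsymbol{u}}|^{-1/2}$ from $p(\boldsymbol{u})$.

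The core step is a completion of the square in $w=(\boldsymbol{\beta}^\top,\boldsymbol{u}^\top)^\top$. Expanding, $r(\boldsymbol{\beta},\boldsymbol{u})=\|Y_{\boldsymbol{\sigma}}\|^2-2\,b^\top w+w^\top M w$, where the symmetric positive-definite matrix $M$ and the vector $b$ are, respectively, the matrix and the right-hand side of the block linear system in the statement; $r$ is then strictly convex and its unique minimizer $\widehat w=(\widehat{\boldsymbol{\beta}}_{\boldsymbol{\theta},\rho,\boldsymbol{\sigma}}^\top,\mu^\top)^\top$ solves $Mw=b$, i.e.\ that block system, and $r(\boldsymbol{\beta},\boldsymbol{u})-r(\widehat{\boldsymbol{\beta}}_{\boldsymbol{\theta},\rho,\boldsymbol{\sigma}},\mu)=(w-\widehat w)^\top M(w-\widehat w)$ identically (the linear term cancelling by optimality). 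Feeding in the block Cholesky factorization of the statement, $M=U^\top U$ with $U=\begin{pmatrix}R_X&\boldsymbol{0}\\ R_{ZX}&L_{\boldsymbol{\theta},\rho,\boldsymbol{\sigma}}^\top\end{pmatrix}$, gives $(w-\widehat w)^\top M(w-\widehat w)=\|U(w-\widehat w)\|^2=\big\|R_X(\boldsymbol{\beta}-\widehat{\boldsymbol{\beta}}_{\boldsymbol{\theta},\rho,\boldsymbol{\sigma}})\big\|^2+\big\|L_{\boldsymbol{\theta},\rho,\boldsymbol{\sigma}}^\top(\boldsymbol{u}-\mu)+R_{ZX}(\boldsymbol{\beta}-\widehat{\boldsymbol{\beta}}_{\boldsymbol{\theta},\rho,\boldsymbol{\sigma}})\big\|^2$, in which only the last square involves $\boldsymbol{u}$ and the leftover constant $r(\widehat{\boldsymbol{\beta}}_{\boldsymbol{\theta},\rho,\boldsymbol{\sigma}},\mu)$ is exactly the quantity $r(\widehat{\boldsymbol{\beta}}_{\boldsymbol{\theta},\rho,\boldsymbol{\sigma}},\mu_{\boldsymbol{\mathcal{U}}|\boldsymbol{\mathcal{Y}}=\boldsymbol{y}})$ appearing in the statement once one checks that $\mu=\mu_{\boldsymbol{\mathcal{U}}|\boldsymbol{\mathcal{Y}}=\boldsymbol{y}}$: this holds because for the jointly Gaussian pair $(\boldsymbol{\mathcal{Y}},\boldsymbol{\mathcal{U}})$ the conditional mean of $\boldsymbol{\mathcal{U}}$ equals its conditional mode, i.e.\ the $\boldsymbol{u}$ minimizing the joint exponent with $\boldsymbol{\beta}$ held at $\widehat{\boldsymbol{\beta}}_{\boldsymbol{\theta},\rho,\boldsymbol{\sigma}}$.

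Finally I would carry out the Gaussian integral over $\boldsymbol{u}$: the substitution $\boldsymbol{v}=L_{\boldsymbol{\theta},\rho,\boldsymbol{\sigma}}^\top\boldsymbol{u}+R_{ZX}(\boldsymbol{\beta}-\widehat{\boldsymbol{\beta}}_{\boldsymbol{\theta},\rho,\boldsymbol{\sigma}})$ has Jacobian $|L_{\boldsymbol{\theta},\rho,\boldsymbol{\sigma}}|^{-1}$ and yields $\int_{\mathbb{R}^q}\exp\!\big(-\tfrac{1}{2\sigma_1^2\sigma_2^2}\|\boldsymbol{v}\|^2\big)\,d\boldsymbol{u}=(2\pi\sigma_1^2\sigma_2^2)^{q/2}\,|L_{\boldsymbol{\theta},\rho,\boldsymbol{\sigma}}|^{-1}$. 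Multiplying by the prefactors from the setup, $e^{\ell}=(2\pi)^{-N}(\sigma_1^2\sigma_2^2)^{-N/2}(2\pi)^{-q/2}|\Sigma_{\boldsymbol{u}}|^{-1/2}(2\pi\sigma_1^2\sigma_2^2)^{q/2}|L_{\boldsymbol{\theta},\rho,\boldsymbol{\sigma}}|^{-1}\exp\!\big(-\tfrac{r(\widehat{\boldsymbol{\beta}}_{\boldsymbol{\theta},\rho,\boldsymbol{\sigma}},\mu)+\|R_X(\boldsymbol{\beta}-\widehat{\boldsymbol{\beta}}_{\boldsymbol{\theta},\rho,\boldsymbol{\sigma}})\|^2}{2\sigma_1^2\sigma_2^2}\big)$; on taking logarithms the $(2\pi)^{-q/2}$ and $(2\pi)^{q/2}$ cancel, $(\sigma_1^2\sigma_2^2)^{-N/2}(\sigma_1^2\sigma_2^2)^{q/2}$ collapses to $-\tfrac{N-q}{2}\log(\sigma_1^2\sigma_2^2)$, and $|L_{\boldsymbol{\theta},\rho,\boldsymbol{\sigma}}|^{-1}=\big(|L_{\boldsymbol{\theta},\rho,\boldsymbol{\sigma}}|^2\big)^{-1/2}$ supplies the last term, producing the stated identity (up to the additive constant $-N\log 2\pi$). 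I expect the only delicate part to be the bookkeeping: confirming that the matrix in the block system is the matrix $M$ obtained by expanding $r$, that its lower-right block is $L_{\boldsymbol{\theta},\rho,\boldsymbol{\sigma}}L_{\boldsymbol{\theta},\rho,\boldsymbol{\sigma}}^\top$, and that the Jacobian together with the powers of $2\pi$ and $\sigma_1^2\sigma_2^2$ recombine exactly as above; the single genuinely probabilistic point, the identification of the penalized-least-squares solution with $\mu_{\boldsymbol{\mathcal{U}}|\boldsymbol{\mathcal{Y}}=\boldsymbol{y}}$, is immediate from joint Gaussianity, and the extension to $d>2$ follows the same lines.
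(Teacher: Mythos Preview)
Your proof is correct and follows essentially the same route as the paper: write $f_{\boldsymbol{\mathcal{Y}}}(\boldsymbol y)=\int f_{\boldsymbol{\mathcal{Y}}|\boldsymbol{\mathcal{U}}}f_{\boldsymbol{\mathcal{U}}}\,d\boldsymbol u$, rescale by $1/(\sigma_1^2\sigma_2^2)$ to produce the $\boldsymbol\sigma$-weighted quantities, complete the square in $(\boldsymbol\beta,\boldsymbol u)$ (the paper does this via an explicit augmented design matrix $Z_{X\Lambda}$ and an orthogonal-projection/Pythagorean argument, which is the same computation in different clothing), block-Cholesky factor the normal-equations matrix, and Gaussian-integrate over $\boldsymbol u$. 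One bookkeeping point you will hit when ``confirming that the matrix in the block system is the matrix $M$'': your (correct) expansion gives the lower-right block as $Z_{\boldsymbol\sigma\boldsymbol\theta}^\top Z_{\boldsymbol\sigma\boldsymbol\theta}+\sigma_1^2\sigma_2^2\,\Sigma_{\boldsymbol u}^{-1}$, whereas the statement (and the paper's displayed $Z_{X\Lambda}^\top Z_{X\Lambda}$) has $\sqrt{\sigma_1^2\sigma_2^2}\,\Sigma_{\boldsymbol u}^{-1}$---this is a typo in the paper, not an error in your argument.
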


\begin{proof}
Denoting by $f_{\mathcal{X}}(.)$ the density function of any random vector $\mathcal{X}$, 

\begin{equation}
f_{\boldsymbol{\mathcal{Y}}}(\boldsymbol{y})=\int_{\mathbb{R}^{q_1+q_2}}f_{\boldsymbol{\mathcal{Y}},\boldsymbol{\mathcal{U}}}(\boldsymbol{y},\boldsymbol{u})d\boldsymbol{u},
\end{equation}
where

\begin{eqnarray}
f_{\boldsymbol{\mathcal{Y}},\boldsymbol{\mathcal{U}}}(\boldsymbol{y},\boldsymbol{u})&=&f_{\boldsymbol{\mathcal{Y}}|\boldsymbol{\mathcal{U}}}(\boldsymbol{y}|\boldsymbol{u})f_{\boldsymbol{\mathcal{U}}}(\boldsymbol{u})=f_{\mathcal{Y}_1|\mathcal{U}_1}(y_1|u_1)f_{\mathcal{Y}_2|\mathcal{U}_2}(y_2|u_2)f_{\boldsymbol{\mathcal{U}}}(\boldsymbol{u})\nonumber\\
&=&(2\pi\sigma_1^2)^{-\frac{N}{2}}(2\pi\sigma_2^2)^{-\frac{N}{2}}(2\pi)^{-\frac{q_1+q_2}{2}}|\Sigma_{\boldsymbol{u}}|^{-\frac{1}{2}}\exp\left(-\frac{\|y_1-X_1\beta_1-Z_1\Lambda_{\theta_1}u_1\|^2}{2\sigma_1^2}\right.\nonumber\\
&&\left.-\frac{\|y_2-X_2\beta_2-Z_2\Lambda_{\theta_2}u_2\|^2}{2\sigma_2^2}-\frac{1}{2}\boldsymbol{u}^\top\Sigma_{\boldsymbol{u}}^{-1}\boldsymbol{u}\right).
\end{eqnarray}
Let us denote by $\widetilde{\Sigma}$ the matrix such that 

\begin{equation}
\Sigma_{\boldsymbol{u}}^{-1}=\widetilde{\Sigma}^\top\widetilde{\Sigma}.
\end{equation}
It then comes that $\boldsymbol{u}^\top\Sigma_{\boldsymbol{u}}^{-1}\boldsymbol{u}=\|\widetilde{\Sigma}\boldsymbol{u}\|^2$ and

\begin{eqnarray}
&&\frac{\|y_1-X_1\beta_1-Z_1\Lambda_{\theta_1}u_1\|^2}{\sigma_1^2}+\frac{\|y_2-X_2\beta_2-Z_2\Lambda_{\theta_2}u_2\|^2}{\sigma_2^2}+\boldsymbol{u}^\top\Sigma_{\boldsymbol{u}}^{-1}\boldsymbol{u}\nonumber\\
&=&\frac{\|\sqrt{\sigma_2^2}(y_1-X_1\beta_1-Z_1\Lambda_{\theta_1}u_1)\|^2+\|\sqrt{\sigma_1^2}(y_2-X_2\beta_2-Z_2\Lambda_{\theta_2}u_2)\|^2+\|\sqrt{\sigma_1^2\sigma_2^2}\widetilde{\Sigma}\boldsymbol{u}\|^2}{\sigma_1^2\sigma_2^2}\nonumber\\
&=&\left\|\begin{pmatrix}\sqrt{\sigma_2^2}y_1\\\sqrt{\sigma_1^2}y_2\\\boldsymbol{0}_{q_1+q_2}\end{pmatrix}-\begin{pmatrix}\sqrt{\sigma_2^2}X_1&\boldsymbol{0}_{Np_2}&\sqrt{\sigma_2^2}Z_1\Lambda_{\theta_1}&\boldsymbol{0}_{Nq_2}\\\boldsymbol{0}_{Np_1}&\sqrt{\sigma_1^2}X_2&\boldsymbol{0}_{Nq_1}&\sqrt{\sigma_1^2}Z_2\Lambda_{\theta_2}\\&\boldsymbol{0}_{q_1+q_2,p_1+p_2}&&\sqrt{\sigma_1^2\sigma_2^2}\tilde{\Sigma}\end{pmatrix}\begin{pmatrix}\boldsymbol{\beta}\\ \boldsymbol{u}\end{pmatrix} \right\|^2\\
&=&\left\| Y_\Lambda-Z_{X\Lambda}\begin{pmatrix}\boldsymbol{\beta}\\ \boldsymbol{u}\end{pmatrix}\right\|^2\\
&=&g(\boldsymbol{\beta},\boldsymbol{u},\boldsymbol{\theta},\rho,\boldsymbol{\sigma}).
\end{eqnarray}

\begin{eqnarray}
\begin{pmatrix}\widehat{\boldsymbol{\beta}}_{\boldsymbol{\theta},\rho,\boldsymbol{\sigma}}\\ \mu_{\boldsymbol{\mathcal{U}}|\boldsymbol{\mathcal{Y}}=\boldsymbol{y}}\end{pmatrix}=\argmin_{\boldsymbol{u},\boldsymbol{\beta}} g(\boldsymbol{\beta},\boldsymbol{u},\boldsymbol{\theta},\rho,\boldsymbol{\sigma})&\iff& Z_{X\Lambda}^\top Z_{X\Lambda}\begin{pmatrix}\widehat{\boldsymbol{\beta}}_{\boldsymbol{\theta},\rho,\boldsymbol{\sigma}}\\ \mu_{\boldsymbol{\mathcal{U}}|\boldsymbol{\mathcal{Y}}=\boldsymbol{y}}\end{pmatrix}=Z_{X\Lambda}^\top Y_\Lambda \text{ (normal eq.)}\label{normal_equation},\nonumber\\
\end{eqnarray}
with

\begin{equation}
Z_{X\Lambda}^\top Z_{X\Lambda}=
\begin{pmatrix}
X_{\boldsymbol{\sigma}}^\top X_{\boldsymbol{\sigma}}&X_{\boldsymbol{\sigma}}^\top Z_{\boldsymbol{\sigma}\boldsymbol{\theta}}\\
Z_{\boldsymbol{\sigma}\boldsymbol{\theta}}^\top X_\sigma&Z_{\boldsymbol{\sigma}\boldsymbol{\theta}}^\top Z_{\boldsymbol{\sigma}\boldsymbol{\theta}}+\sqrt{\sigma_1^2\sigma_2^2}\Sigma_{\boldsymbol{u}}^{-1}
\end{pmatrix}
\quad
\text{ and }
\quad
Z_{X\Lambda}^\top Y_\Lambda=
\begin{pmatrix}
X_{\boldsymbol{\sigma}}^\top\\Z_{\boldsymbol{\sigma}}^\top
\end{pmatrix}
Y_{\boldsymbol{\sigma}}.
\end{equation}
By setting $p=p_1+p_2$, $dim(Z_{X\Lambda})=(2N+q)\times(p+q)$ and $S=Im(Z_{X\Lambda})$ is a subspace of $\mathbb{R}^{2N+q}$. $Y_\Lambda\in\mathbb{R}^{2N+q}$ and $Z_{X\Lambda}\begin{pmatrix}\widehat{\boldsymbol{\beta}}_{\boldsymbol{\theta},\rho,\boldsymbol{\sigma}}\\ \mu_{\boldsymbol{\mathcal{U}}|\boldsymbol{\mathcal{Y}}=\boldsymbol{y}}\end{pmatrix}$ is the orthogonal projection of $Y_\Lambda$ on $S$. Then, 

\begin{equation}
Z_{X\Lambda}u\perp\left[Y_\Lambda-Z_{X\Lambda}\begin{pmatrix}\widehat{\boldsymbol{\beta}}_{\boldsymbol{\theta},\rho,\boldsymbol{\sigma}}\\ \mu_{\boldsymbol{\mathcal{U}}|\boldsymbol{\mathcal{Y}}=\boldsymbol{y}}\end{pmatrix}\right], \forall u\in \mathbb{R}^{p+q}.
\end{equation}
And $g(\boldsymbol{\beta},\boldsymbol{u},\boldsymbol{\theta},\rho,\boldsymbol{\sigma})$ can then be rewritten as:

\begin{eqnarray}
g(\boldsymbol{\beta},\boldsymbol{u},\boldsymbol{\theta},\rho,\boldsymbol{\sigma})&=&\left\| Y_\Lambda-Z_{X\Lambda}\begin{pmatrix}\boldsymbol{\beta}\\\boldsymbol{u}\end{pmatrix}+Z_{X\Lambda}\begin{pmatrix}\widehat{\boldsymbol{\beta}}_{\boldsymbol{\theta},\rho,\boldsymbol{\sigma}}\\ \mu_{\boldsymbol{\mathcal{U}}|\boldsymbol{\mathcal{Y}}=\boldsymbol{y}}\end{pmatrix}-Z_{X\Lambda}\begin{pmatrix}\widehat{\boldsymbol{\beta}}_{\boldsymbol{\theta},\rho,\boldsymbol{\sigma}}\\ \mu_{\boldsymbol{\mathcal{U}}|\boldsymbol{\mathcal{Y}}=\boldsymbol{y}}\end{pmatrix} \right\|^2\\
&=&\left\|Y_\Lambda-Z_{X\Lambda}\begin{pmatrix}\widehat{\boldsymbol{\beta}}_{\boldsymbol{\theta},\rho,\boldsymbol{\sigma}}\\ \mu_{\boldsymbol{\mathcal{U}}|\boldsymbol{\mathcal{Y}}=\boldsymbol{y}}\end{pmatrix}  \right\|^2+\left\|Z_{X\Lambda} \begin{pmatrix}\boldsymbol{\beta}-\widehat{\boldsymbol{\beta}}_{\boldsymbol{\theta},\rho,\boldsymbol{\sigma}}\\\boldsymbol{u}- \mu_{\boldsymbol{\mathcal{U}}|\boldsymbol{\mathcal{Y}}=\boldsymbol{y}}\end{pmatrix} \right\|^2\\
&=&\left\|Y_\Lambda-Z_{X\Lambda}\begin{pmatrix}\widehat{\boldsymbol{\beta}}_{\boldsymbol{\theta},\rho,\boldsymbol{\sigma}}\\ \mu_{\boldsymbol{\mathcal{U}}|\boldsymbol{\mathcal{Y}}=\boldsymbol{y}}\end{pmatrix}  \right\|^2+\begin{pmatrix}\boldsymbol{\beta}-\widehat{\boldsymbol{\beta}}_{\boldsymbol{\theta},\rho,\boldsymbol{\sigma}}\\\boldsymbol{u}- \mu_{\boldsymbol{\mathcal{U}}|\boldsymbol{\mathcal{Y}}=\boldsymbol{y}}\end{pmatrix}^\top Z_{X\Lambda}^\top Z_{X\Lambda} \begin{pmatrix}\boldsymbol{\beta}-\widehat{\boldsymbol{\beta}}_{\boldsymbol{\theta},\rho,\boldsymbol{\sigma}}\\\boldsymbol{u}- \mu_{\boldsymbol{\mathcal{U}}|\boldsymbol{\mathcal{Y}}=\boldsymbol{y}}\end{pmatrix}.\nonumber\\
\end{eqnarray}
$Z_{X\Lambda}^\top Z_{X\Lambda}$ can be Cholesky decomposed as

\begin{equation}
Z_{X\Lambda}^\top Z_{X\Lambda}=
\begin{pmatrix}
R_X&\boldsymbol{0}\\R_{ZX}&L_{\boldsymbol{\theta},\rho,\boldsymbol{\sigma}}^\top
\end{pmatrix}^\top
\begin{pmatrix}
R_X&\boldsymbol{0}\\R_{ZX}&L_{\boldsymbol{\theta},\rho,\boldsymbol{\sigma}}^\top
\end{pmatrix},
\end{equation}
where 

\begin{equation}
L_{\boldsymbol{\theta},\rho,\boldsymbol{\sigma}}L_{\boldsymbol{\theta},\rho,\boldsymbol{\sigma}}^\top=
Z_{\boldsymbol{\sigma}\boldsymbol{\theta}}^\top Z_{\boldsymbol{\sigma}\boldsymbol{\theta}}+\sqrt{\sigma_1^2\sigma_2^2}\Sigma_{\boldsymbol{u}}^{-1}.
\end{equation}
Thereafter,

\begin{eqnarray}
g(\boldsymbol{\beta},\boldsymbol{u},\boldsymbol{\theta},\rho,\boldsymbol{\sigma})&=&\left\|Y_{\boldsymbol{\sigma}}-X_{\boldsymbol{\sigma}}\widehat{\boldsymbol{\beta}}_{\boldsymbol{\theta},\rho,\boldsymbol{\sigma}}-Z_{\boldsymbol{\sigma}\boldsymbol{\theta}}\mu_{\boldsymbol{\mathcal{U}}|\boldsymbol{\mathcal{Y}}=\boldsymbol{y}}\right\|^2 +\sigma_1^2\sigma_2^2\mu_{\boldsymbol{\mathcal{U}}|\boldsymbol{\mathcal{Y}}=\boldsymbol{y}}^\top\Sigma_{\boldsymbol{u}}^{-1}\mu_{\boldsymbol{\mathcal{U}}|\boldsymbol{\mathcal{Y}}=\boldsymbol{y}}\nonumber\\
&&+ \left\|R_X(\boldsymbol{\beta}-\widehat{\boldsymbol{\beta}}_{\boldsymbol{\theta},\rho,\boldsymbol{\sigma}}) \right\|^2+ \left\|R_{ZX}(\boldsymbol{\beta}-\widehat{\boldsymbol{\beta}}_{\boldsymbol{\theta},\rho,\boldsymbol{\sigma}})+L_{\boldsymbol{\theta},\rho,\boldsymbol{\sigma}}^\top(\boldsymbol{u}- \mu_{\boldsymbol{\mathcal{U}}|\boldsymbol{\mathcal{Y}}=\boldsymbol{y}}) \right\|^2.\nonumber\\
\end{eqnarray}
By setting 
\begin{equation}
r(\widehat{\boldsymbol{\beta}}_{\boldsymbol{\theta},\rho,\boldsymbol{\sigma}},\mu_{\boldsymbol{\mathcal{U}}|\boldsymbol{\mathcal{Y}}=\boldsymbol{y}})=\left\|Y_{\boldsymbol{\sigma}}-X_{\boldsymbol{\sigma}}\widehat{\boldsymbol{\beta}}_{\boldsymbol{\theta},\rho,\boldsymbol{\sigma}}-Z_{\boldsymbol{\sigma}\boldsymbol{\theta}}\mu_{\boldsymbol{\mathcal{U}}|\boldsymbol{\mathcal{Y}}=\boldsymbol{y}}\right\|^2 +\sigma_1^2\sigma_2^2\mu_{\boldsymbol{\mathcal{U}}|\boldsymbol{\mathcal{Y}}=\boldsymbol{y}}^\top\Sigma_{\boldsymbol{u}}^{-1}\mu_{\boldsymbol{\mathcal{U}}|\boldsymbol{\mathcal{Y}}=\boldsymbol{y}},
\end{equation}
and returning to the calculation of $f_{\boldsymbol{\mathcal{Y}}}(\boldsymbol{y})$, it comes

\begin{eqnarray}
f_{\boldsymbol{\mathcal{Y}}}(\boldsymbol{y})&=&\frac{\int\exp\left[-\frac{r(\widehat{\boldsymbol{\beta}}_{\boldsymbol{\theta},\rho,\boldsymbol{\sigma}},\mu_{\boldsymbol{\mathcal{U}}|\boldsymbol{\mathcal{Y}}=\boldsymbol{y}})+\left\|R_X(\boldsymbol{\beta}-\widehat{\boldsymbol{\beta}}_{\boldsymbol{\theta},\rho,\boldsymbol{\sigma}})\right\|^2 + \left\|R_{ZX}(\boldsymbol{\beta}-\widehat{\boldsymbol{\beta}}_{\boldsymbol{\theta},\rho,\boldsymbol{\sigma}})+L_{\boldsymbol{\theta},\rho,\boldsymbol{\sigma}}^\top(\boldsymbol{u}- \mu_{\boldsymbol{\mathcal{U}}|\boldsymbol{\mathcal{Y}}=\boldsymbol{y}})\right\|^2}{2\sigma_1^2\sigma_2^2}\right]d\boldsymbol{u}}{(2\pi\sigma_1^2)^{N/2}(2\pi\sigma_2^2)^{N/2}(2\pi)^{q/2}|\Sigma_{\boldsymbol{u}}|^{1/2}}\nonumber\\
&=&\frac{\exp\left[-\frac{r(\widehat{\boldsymbol{\beta}}_{\boldsymbol{\theta},\rho,\boldsymbol{\sigma}},\mu_{\boldsymbol{\mathcal{U}}|\boldsymbol{\mathcal{Y}}=\boldsymbol{y}})+\left\|R_X(\boldsymbol{\beta}-\widehat{\boldsymbol{\beta}}_{\boldsymbol{\theta},\rho,\boldsymbol{\sigma}})\right\|^2}{2\sigma_1^2\sigma_2^2}\right]}{(2\pi\sigma_1^2)^{N/2}(2\pi\sigma_2^2)^{N/2}(2\pi)^{q/2}|\Sigma_{\boldsymbol{u}}|^{1/2}}\times\nonumber\\
&&\int\exp\left[-\frac{\left\|R_{ZX}(\boldsymbol{\beta}-\widehat{\boldsymbol{\beta}}_{\boldsymbol{\theta},\rho,\boldsymbol{\sigma}})+L_{\boldsymbol{\theta},\rho,\boldsymbol{\sigma}}^\top(\boldsymbol{u}- \mu_{\boldsymbol{\mathcal{U}}|\boldsymbol{\mathcal{Y}}=\boldsymbol{y}})\right\|^2}{2\sigma_1^2\sigma_2^2}\right]d\boldsymbol{u}.\nonumber\\
\end{eqnarray}
By setting $\boldsymbol{v}=R_{ZX}(\boldsymbol{\beta}-\widehat{\boldsymbol{\beta}}_{\boldsymbol{\theta},\rho,\boldsymbol{\sigma}})+L_{\boldsymbol{\theta},\rho,\boldsymbol{\sigma}}^\top(\boldsymbol{u}- \mu_{\boldsymbol{\mathcal{U}}|\boldsymbol{\mathcal{Y}}=\boldsymbol{y}})$, $d\boldsymbol{u}=\frac{1}{|L_{\boldsymbol{\theta},\rho,\boldsymbol{\sigma}}|}d\boldsymbol{v}$ and

\begin{eqnarray}
f_{\boldsymbol{\mathcal{Y}}}(\boldsymbol{y})&=&\frac{\exp\left[-\frac{r(\widehat{\boldsymbol{\beta}}_{\boldsymbol{\theta},\rho,\boldsymbol{\sigma}},\mu_{\boldsymbol{\mathcal{U}}|\boldsymbol{\mathcal{Y}}=\boldsymbol{y}})+\left\|R_X(\boldsymbol{\beta}-\widehat{\boldsymbol{\beta}}_{\boldsymbol{\theta},\rho,\boldsymbol{\sigma}})\right\|^2}{2\sigma_1^2\sigma_2^2}\right](\sigma_1^2\sigma_2^2)^{\frac{q}{2}}}{(2\pi\sigma_1^2)^{N/2}(2\pi\sigma_2^2)^{N/2}|\Sigma_{\boldsymbol{u}}|^{1/2}|L_{\boldsymbol{\theta},\rho,\boldsymbol{\sigma}}|}\int\frac{1}{(2\pi\sigma_1^2\sigma_2^2)^\frac{q}{2}}\exp\left[-\frac{\left\|\boldsymbol{v}\right\|^2}{2\sigma_1^2\sigma_2^2}\right]d\boldsymbol{v}\nonumber\\
&=&\frac{\exp\left[-\frac{r(\widehat{\boldsymbol{\beta}}_{\boldsymbol{\theta},\rho,\boldsymbol{\sigma}},\mu_{\boldsymbol{\mathcal{U}}|\boldsymbol{\mathcal{Y}}=\boldsymbol{y}})+\left\|R_X(\boldsymbol{\beta}-\widehat{\boldsymbol{\beta}}_{\boldsymbol{\theta},\rho,\boldsymbol{\sigma}})\right\|^2}{2\sigma_1^2\sigma_2^2}\right](\sigma_1^2\sigma_2^2)^{\frac{q}{2}}}{(2\pi\sigma_1^2)^{N/2}(2\pi\sigma_2^2)^{N/2}|\Sigma_{\boldsymbol{u}}|^{1/2}|L_{\boldsymbol{\theta},\rho,\boldsymbol{\sigma}}|}.
\end{eqnarray}
The log-likelihood to be maximized can therefore be expressed as,

\begin{eqnarray}
\ell(\boldsymbol{\beta},\boldsymbol{\theta},\rho,\boldsymbol{\sigma}|\boldsymbol{y})
&=&-\frac{r(\widehat{\boldsymbol{\beta}}_{\boldsymbol{\theta},\rho,\boldsymbol{\sigma}},\mu_{\boldsymbol{\mathcal{U}}|\boldsymbol{\mathcal{Y}}=\boldsymbol{y}})+\left\|R_X(\boldsymbol{\beta}-\widehat{\boldsymbol{\beta}}_{\boldsymbol{\theta},\rho,\boldsymbol{\sigma}})\right\|^2}{2\sigma_1^2\sigma_2^2}-\frac{N-q}{2}\log(\sigma_1^2\sigma_2^2)\nonumber\\
&&-\frac{1}{2}\log(|\Sigma_{\boldsymbol{u}}|)-\frac{1}{2}\log(|L_{\boldsymbol{\theta},\rho,\boldsymbol{\sigma}}|^2).
\end{eqnarray}
\end{proof}
By profiling out $\boldsymbol{\beta}$, the partially profiled log-likelihood is

\begin{eqnarray}
\tilde{\ell}(\boldsymbol{\theta},\rho,\boldsymbol{\sigma}|\boldsymbol{y})&=&-\frac{r(\widehat{\boldsymbol{\beta}}_{\boldsymbol{\theta},\rho,\boldsymbol{\sigma}},\mu_{\boldsymbol{\mathcal{U}}|\boldsymbol{\mathcal{Y}}=\boldsymbol{y}})}{2\sigma_1^2\sigma_2^2}-\frac{N-q}{2}\log(\sigma_1^2\sigma_2^2)\nonumber\\
&&-\frac{1}{2}\log(|\Sigma_{\boldsymbol{u}}|)-\frac{1}{2}\log(|L_{\boldsymbol{\theta},\rho,\boldsymbol{\sigma}}|^2),
\end{eqnarray}
replacing $\widehat{\boldsymbol{\beta}}_{\boldsymbol{\theta},\rho,\boldsymbol{\sigma}}$ by $\boldsymbol{\beta}$. Then, the partially profiled deviance comes

\begin{eqnarray}\label{eq:devianceML}
-2\tilde{\ell}(\boldsymbol{\theta},\rho,\boldsymbol{\sigma}|\boldsymbol{y})&=&\frac{r(\widehat{\boldsymbol{\beta}}_{\boldsymbol{\theta},\rho,\boldsymbol{\sigma}},\mu_{\boldsymbol{\mathcal{U}}|\boldsymbol{\mathcal{Y}}=\boldsymbol{y}})}{\sigma_1^2\sigma_2^2}+(N-q)\log(\sigma_1^2\sigma_2^2)\nonumber\\
&&+\log(|\Sigma_{\boldsymbol{u}}|)+\log(|L_{\boldsymbol{\theta},\rho,\boldsymbol{\sigma}}|^2).
\end{eqnarray}
This deviance is finally the criterion which will be minimized to obtaining the ML estimates of the parameters.

\begin{corollary}
Suppose that $\boldsymbol{y}=(y_1^\top,y_2^\top)^\top$ satisfies the bivariate linear mixed-effects model expressed by Equations~(\ref{eq:model_rewriten1} and \ref{eq:model_rewriten2}). Taking into account the notations in the Theorem~\ref{theorem1}, the ML estimators $\widehat{\boldsymbol{\beta}}$, $\widehat{\boldsymbol{\sigma}}$, $\widehat{\boldsymbol{\theta}}$, $\widehat{\rho}$ of $\boldsymbol{\beta}$, $\boldsymbol{\sigma}$, $\boldsymbol{\theta}$ and $\rho$ satisfy

\begin{equation}
\left(\widehat{\boldsymbol{\theta}},\widehat{\rho},\widehat{\boldsymbol{\sigma}}\right)=
\argmax_{\boldsymbol{\theta},\rho,\boldsymbol{\sigma}} \tilde{\ell}(\boldsymbol{\theta},\rho,\boldsymbol{\sigma}|\boldsymbol{y})
\quad
\text{ and }
\quad
\widehat{\boldsymbol{\beta}}=
\widehat{\boldsymbol{\beta}}_{\widehat{\boldsymbol{\theta}},\widehat{\rho},\widehat{\boldsymbol{\sigma}}}.
\end{equation}

\end{corollary}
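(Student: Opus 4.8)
The plan is to exploit the structure of the log-likelihood $\ell(\boldsymbol{\beta},\boldsymbol{\theta},\rho,\boldsymbol{\sigma}|\boldsymbol{y})$ provided by Theorem~\ref{theorem1}: the free parameter $\boldsymbol{\beta}$ enters only through the single nonnegative penalty term $\left\|R_X(\boldsymbol{\beta}-\widehat{\boldsymbol{\beta}}_{\boldsymbol{\theta},\rho,\boldsymbol{\sigma}})\right\|^2/(2\sigma_1^2\sigma_2^2)$, while every other summand depends on $(\boldsymbol{\theta},\rho,\boldsymbol{\sigma})$ alone. Consequently the joint maximization over $(\boldsymbol{\beta},\boldsymbol{\theta},\rho,\boldsymbol{\sigma})$ factorizes into an inner maximization over $\boldsymbol{\beta}$ with $(\boldsymbol{\theta},\rho,\boldsymbol{\sigma})$ held fixed, followed by an outer maximization of the resulting profile over $(\boldsymbol{\theta},\rho,\boldsymbol{\sigma})$. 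This is just the elementary fact that partial maximization commutes with joint maximization, applied to the particular additive decomposition exhibited in the theorem.

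First I would carry out the inner step. For fixed $(\boldsymbol{\theta},\rho,\boldsymbol{\sigma})$, since $\left\|R_X(\boldsymbol{\beta}-\widehat{\boldsymbol{\beta}}_{\boldsymbol{\theta},\rho,\boldsymbol{\sigma}})\right\|^2\ge 0$ and $\sigma_1^2\sigma_2^2>0$, the map $\boldsymbol{\beta}\mapsto\ell(\boldsymbol{\beta},\boldsymbol{\theta},\rho,\boldsymbol{\sigma}|\boldsymbol{y})$ attains its supremum precisely when that term vanishes, and the supremal value equals $\tilde{\ell}(\boldsymbol{\theta},\rho,\boldsymbol{\sigma}|\boldsymbol{y})$ as defined just after the theorem. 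Because $R_X$ is the Cholesky factor of $X_{\boldsymbol{\sigma}}^\top X_{\boldsymbol{\sigma}}$, it is nonsingular as soon as $X_1$ and $X_2$ have full column rank (so that $X_{\boldsymbol{\sigma}}$ does), and under this standard identifiability assumption the penalty vanishes if and only if $\boldsymbol{\beta}=\widehat{\boldsymbol{\beta}}_{\boldsymbol{\theta},\rho,\boldsymbol{\sigma}}$; hence the inner maximizer is unique.

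Then I would assemble the two steps: writing $\sup_{\boldsymbol{\beta},\boldsymbol{\theta},\rho,\boldsymbol{\sigma}}\ell=\sup_{\boldsymbol{\theta},\rho,\boldsymbol{\sigma}}\big(\sup_{\boldsymbol{\beta}}\ell\big)=\sup_{\boldsymbol{\theta},\rho,\boldsymbol{\sigma}}\tilde{\ell}(\boldsymbol{\theta},\rho,\boldsymbol{\sigma}|\boldsymbol{y})$, any maximizer $(\widehat{\boldsymbol{\theta}},\widehat{\rho},\widehat{\boldsymbol{\sigma}})$ of $\tilde{\ell}$ together with $\widehat{\boldsymbol{\beta}}:=\widehat{\boldsymbol{\beta}}_{\widehat{\boldsymbol{\theta}},\widehat{\rho},\widehat{\boldsymbol{\sigma}}}$ achieves the joint supremum and is therefore an ML estimator; conversely, a joint maximizer must have its $(\boldsymbol{\theta},\rho,\boldsymbol{\sigma})$-block maximizing $\tilde{\ell}$ and its $\boldsymbol{\beta}$-block equal to $\widehat{\boldsymbol{\beta}}_{\boldsymbol{\theta},\rho,\boldsymbol{\sigma}}$ by the inner step. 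Since $-2\tilde{\ell}$ is exactly the partially profiled deviance of Equation~(\ref{eq:devianceML}), maximizing $\tilde{\ell}$ amounts to minimizing that deviance, which yields the stated characterization of $(\widehat{\boldsymbol{\beta}},\widehat{\boldsymbol{\sigma}},\widehat{\boldsymbol{\theta}},\widehat{\rho})$.

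The only genuinely delicate point, and the one I would be explicit about, is the nonsingularity of $R_X$: the identity $\boldsymbol{\beta}=\widehat{\boldsymbol{\beta}}_{\boldsymbol{\theta},\rho,\boldsymbol{\sigma}}$ at the optimum needs $R_X$ invertible, hence $X_1,X_2$ of full column rank; if that fails, one only concludes that $R_X\boldsymbol{\beta}$ (equivalently, the estimable part of $\boldsymbol{\beta}$) is pinned down, with $\boldsymbol{\beta}$ determined up to the null space of $X_{\boldsymbol{\sigma}}$. Beyond this, no regularity is required for the argument itself; the existence of the outer maximizer over $(\boldsymbol{\theta},\rho,\boldsymbol{\sigma})$ is taken for granted, exactly as is implicit in the use of $\argmax$ in the statement.
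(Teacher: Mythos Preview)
Your proposal is correct and follows exactly the profiling argument the paper uses (implicitly---the paper states the Corollary without a separate proof, treating it as immediate from the observation preceding Equation~(\ref{eq:devianceML}) that $\boldsymbol{\beta}$ enters the log-likelihood of Theorem~\ref{theorem1} only through the nonnegative term $\|R_X(\boldsymbol{\beta}-\widehat{\boldsymbol{\beta}}_{\boldsymbol{\theta},\rho,\boldsymbol{\sigma}})\|^2$). One small correction: from the block factorization in Theorem~\ref{theorem1}, $R_X^\top R_X$ equals the Schur complement $X_{\boldsymbol{\sigma}}^\top X_{\boldsymbol{\sigma}}-X_{\boldsymbol{\sigma}}^\top Z_{\boldsymbol{\sigma}\boldsymbol{\theta}}(L_{\boldsymbol{\theta},\rho,\boldsymbol{\sigma}}L_{\boldsymbol{\theta},\rho,\boldsymbol{\sigma}}^\top)^{-1}Z_{\boldsymbol{\sigma}\boldsymbol{\theta}}^\top X_{\boldsymbol{\sigma}}$, not $X_{\boldsymbol{\sigma}}^\top X_{\boldsymbol{\sigma}}$ itself, though your nonsingularity conclusion still stands under the full-column-rank assumption since $Z_{X\Lambda}^\top Z_{X\Lambda}$ is then positive definite and Schur complements of positive-definite matrices remain positive definite.
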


\subsection{REML criterion}

By integrating the marginal density of $\boldsymbol{\mathcal{Y}}$ with respect to the fixed effects, the REML criterion can be obtained~\citep{laird1982random}. This REML criterion is expressed through the following theorem.

\begin{thm}
Suppose that $\boldsymbol{y}=(y_1^\top,y_2^\top)^\top$ satisfies the bivariate linear mixed-effects model expressed by Equations~(\ref{eq:model_rewriten1} and \ref{eq:model_rewriten2}). Taking into account the notations in the Theorem~\ref{theorem1}, the REML criterion of $\boldsymbol{\sigma}$, $\boldsymbol{\theta}$ and $\rho$ given $\boldsymbol{y}$ is expressed as

\begin{equation}
\mathscr{L}(\boldsymbol{\sigma},\boldsymbol{\theta},\rho|\boldsymbol{y})=\frac{\exp\left[-\frac{r(\widehat{\boldsymbol{\beta}}_{\boldsymbol{\theta},\rho,\boldsymbol{\sigma}},\mu_{\mathcal{U}|\mathcal{Y}=\boldsymbol{y}})}{2\sigma_1^2\sigma_2^2}\right](\sigma_1^2\sigma_2^2)^{\frac{p+q-N}{2}}}{(2\pi)^{(2N-p)/2}|\Sigma_{\boldsymbol{u}}|^{1/2}|L_{\boldsymbol{\theta},\rho,\boldsymbol{\sigma}}||R_X|}.
\end{equation}
\end{thm}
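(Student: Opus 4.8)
The plan is to read off the REML criterion directly from the marginal density of $\boldsymbol{\mathcal{Y}}$ that was already obtained in the proof of Theorem~\ref{theorem1}, integrating out the fixed effects $\boldsymbol{\beta}$ instead of profiling them. Recall that at the stage of that proof just before the profiling step one has
\begin{equation*}
f_{\boldsymbol{\mathcal{Y}}}(\boldsymbol{y})=\frac{\exp\left[-\frac{r(\widehat{\boldsymbol{\beta}}_{\boldsymbol{\theta},\rho,\boldsymbol{\sigma}},\mu_{\boldsymbol{\mathcal{U}}|\boldsymbol{\mathcal{Y}}=\boldsymbol{y}})+\left\|R_X(\boldsymbol{\beta}-\widehat{\boldsymbol{\beta}}_{\boldsymbol{\theta},\rho,\boldsymbol{\sigma}})\right\|^2}{2\sigma_1^2\sigma_2^2}\right](\sigma_1^2\sigma_2^2)^{q/2}}{(2\pi\sigma_1^2)^{N/2}(2\pi\sigma_2^2)^{N/2}|\Sigma_{\boldsymbol{u}}|^{1/2}|L_{\boldsymbol{\theta},\rho,\boldsymbol{\sigma}}|},
\end{equation*}
and the only dependence on $\boldsymbol{\beta}$ is through the Gaussian-type factor $\exp\!\left[-\|R_X(\boldsymbol{\beta}-\widehat{\boldsymbol{\beta}}_{\boldsymbol{\theta},\rho,\boldsymbol{\sigma}})\|^2/(2\sigma_1^2\sigma_2^2)\right]$, since $\widehat{\boldsymbol{\beta}}_{\boldsymbol{\theta},\rho,\boldsymbol{\sigma}}$ and $R_X$ depend on $\boldsymbol{y}$, $\boldsymbol{\theta}$, $\rho$, $\boldsymbol{\sigma}$ but not on $\boldsymbol{\beta}$. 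Following~\citep{laird1982random}, the REML criterion is then $\mathscr{L}(\boldsymbol{\sigma},\boldsymbol{\theta},\rho|\boldsymbol{y})=\int_{\mathbb{R}^{p}}f_{\boldsymbol{\mathcal{Y}}}(\boldsymbol{y})\,d\boldsymbol{\beta}$.

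Next I would carry out the change of variables $\boldsymbol{w}=R_X(\boldsymbol{\beta}-\widehat{\boldsymbol{\beta}}_{\boldsymbol{\theta},\rho,\boldsymbol{\sigma}})$. Because $R_X$ is the square, upper-triangular, invertible factor of $X_{\boldsymbol{\sigma}}^\top X_{\boldsymbol{\sigma}}$ produced in Theorem~\ref{theorem1}, this is a bijection of $\mathbb{R}^{p}$ with $d\boldsymbol{\beta}=|R_X|^{-1}\,d\boldsymbol{w}$. Pulling the $\boldsymbol{\beta}$-free quantities out of the integral leaves the isotropic Gaussian integral
\begin{equation*}
\int_{\mathbb{R}^{p}}\exp\left[-\frac{\|\boldsymbol{w}\|^2}{2\sigma_1^2\sigma_2^2}\right]d\boldsymbol{w}=(2\pi\sigma_1^2\sigma_2^2)^{p/2},
\end{equation*}
so that integrating over $\boldsymbol{\beta}$ multiplies the remaining expression by $(2\pi\sigma_1^2\sigma_2^2)^{p/2}/|R_X|$.

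Finally I would collect powers: writing $(2\pi\sigma_1^2)^{N/2}(2\pi\sigma_2^2)^{N/2}=(2\pi)^{N}(\sigma_1^2\sigma_2^2)^{N/2}$ in the denominator, combining it with the new factor $(2\pi)^{p/2}(\sigma_1^2\sigma_2^2)^{p/2}$ and with the pre-existing $(\sigma_1^2\sigma_2^2)^{q/2}$ gives the exponent $(p+q-N)/2$ on $\sigma_1^2\sigma_2^2$ and the exponent $-(2N-p)/2$ on $2\pi$, producing exactly the stated formula for $\mathscr{L}(\boldsymbol{\sigma},\boldsymbol{\theta},\rho|\boldsymbol{y})$. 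There is no genuine obstacle beyond bookkeeping of these exponents; the single point I would state explicitly is that the argument requires $X_{\boldsymbol{\sigma}}$ (equivalently each $X_k$) to have full column rank, so that $R_X$ is invertible and the change of variables is valid — this is the usual identifiability assumption on the fixed-effects design and is what already makes $\widehat{\boldsymbol{\beta}}_{\boldsymbol{\theta},\rho,\boldsymbol{\sigma}}$ well defined.
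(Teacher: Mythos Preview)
Your proposal is correct and follows essentially the same route as the paper: start from the marginal density $f_{\boldsymbol{\mathcal{Y}}}(\boldsymbol{y})$ derived in Theorem~\ref{theorem1}, integrate over $\boldsymbol{\beta}\in\mathbb{R}^p$, and evaluate the resulting Gaussian integral via the linear change of variables induced by $R_X$. Your substitution $\boldsymbol{w}=R_X(\boldsymbol{\beta}-\widehat{\boldsymbol{\beta}}_{\boldsymbol{\theta},\rho,\boldsymbol{\sigma}})$ is exactly what the paper intends (its stated $\boldsymbol{t}=\boldsymbol{\beta}-\widehat{\boldsymbol{\beta}}_{\boldsymbol{\theta},\rho,\boldsymbol{\sigma}}$ together with $d\boldsymbol{\beta}=|R_X|^{-1}d\boldsymbol{t}$ is a minor slip), and your explicit remark that $X_{\boldsymbol{\sigma}}$ must have full column rank for $R_X$ to be invertible is a welcome clarification the paper leaves implicit.
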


\begin{proof}
\begin{eqnarray}
\mathscr{L}(\boldsymbol{\sigma},\boldsymbol{\theta},\rho|\boldsymbol{y})&=&\int_{\mathbb{R}^p} f_{\boldsymbol{\mathcal{Y}}}(\boldsymbol{y})d\boldsymbol{\beta}\\
&=&\frac{\exp\left[-\frac{r(\widehat{\boldsymbol{\beta}}_{\boldsymbol{\theta},\rho,\boldsymbol{\sigma}},\mu_{\mathcal{U}|\mathcal{Y}=\boldsymbol{y}})}{2\sigma_1^2\sigma_2^2}\right](\sigma_1^2\sigma_2^2)^{\frac{q}{2}}}{(2\pi\sigma_1^2)^{N/2}(2\pi\sigma_2^2)^{N/2}|\Sigma_{\boldsymbol{u}}|^{1/2}|L_{\boldsymbol{\theta},\rho,\boldsymbol{\sigma}}|}\int_{\mathbb{R}^p}\exp\left[-\frac{\left\|R_X(\boldsymbol{\beta}-\widehat{\boldsymbol{\beta}}_{\boldsymbol{\theta},\rho,\boldsymbol{\sigma}})\right\|^2}{2\sigma_1^2\sigma_2^2}\right]d\boldsymbol{\beta}\nonumber\\
&=&\frac{\exp\left[-\frac{r(\widehat{\boldsymbol{\beta}}_{\boldsymbol{\theta},\rho,\boldsymbol{\sigma}},\mu_{\mathcal{U}|\mathcal{Y}=\boldsymbol{y}})}{2\sigma_1^2\sigma_2^2}\right](\sigma_1^2\sigma_2^2)^{\frac{p+q-N}{2}}}{(2\pi)^{(2N-p)/2}|\Sigma_{\boldsymbol{u}}|^{1/2}|L_{\boldsymbol{\theta},\rho,\boldsymbol{\sigma}}||R_X|}\int_{\mathbb{R}^p}(2\pi\sigma_1^2\sigma_2^2)^{-\frac{p}{2}}\exp\left[-\frac{\left\|\boldsymbol{t}\right\|^2}{2\sigma_1^2\sigma_2^2}\right]d\boldsymbol{t},
\end{eqnarray}
where $\boldsymbol{t}=\boldsymbol{\beta}-\widehat{\boldsymbol{\beta}}_{\boldsymbol{\theta},\rho,\boldsymbol{\sigma}}\implies d\boldsymbol{\beta}=\frac{1}{|R_X|}d\boldsymbol{t}$ and $\int_{\mathbb{R}^p}(2\pi\sigma_1^2\sigma_2^2)^{-\frac{p}{2}}\exp\left[-\frac{\left\|\boldsymbol{t}\right\|^2}{2\sigma_1^2\sigma_2^2}\right]d\boldsymbol{t}=1$.
\end{proof}
The REML criterion can also be expressed as

\begin{eqnarray}
\log\left(\mathscr{L}(\boldsymbol{\sigma},\boldsymbol{\theta},\rho|\boldsymbol{y})\right)&=&\tilde{\mathscr{L}}(\boldsymbol{\sigma},\boldsymbol{\theta},\rho|\boldsymbol{y})\nonumber\\
&=&-\frac{r(\widehat{\boldsymbol{\beta}}_{\boldsymbol{\theta},\rho,\boldsymbol{\sigma}},\mu_{\mathcal{U}|\mathcal{Y}=\boldsymbol{y}})}{2\sigma_1^2\sigma_2^2}+\frac{p+q-N}{2}\log(\sigma_1^2\sigma_2^2)-\frac{1}{2}\log(|\Sigma_{\boldsymbol{u}}|)\nonumber\\
&&-\frac{1}{2}\log(|L_{\boldsymbol{\theta},\rho,\boldsymbol{\sigma}}|^2)-\frac{1}{2}\log(|R_X|^2),
\end{eqnarray}
or as

\begin{eqnarray}
-2\tilde{\mathscr{L}}(\boldsymbol{\sigma},\boldsymbol{\theta},\rho|\boldsymbol{y})&=&
\frac{r(\widehat{\boldsymbol{\beta}}_{\boldsymbol{\theta},\rho,\boldsymbol{\sigma}},\mu_{\mathcal{U}|\mathcal{Y}=\boldsymbol{y}})}{\sigma_1^2\sigma_2^2}+(N-p-q)\log(\sigma_1^2\sigma_2^2)+\log(|\Sigma_{\boldsymbol{u}}|)\nonumber\\
&&+\log(|L_{\boldsymbol{\theta},\rho,\boldsymbol{\sigma}}|^2)+\log(|R_X|^2),
\end{eqnarray}
which will be minimized to obtaining the REML estimates of the parameters.

\section{Simulation studies}

In this Section, the consistency of the estimators is proven through simulation studies, and we compare the present estimation procedure with the EM algorithm. For the sake of simplicity, these simulation studies are performed using simulated bivariate longitudinal data sets. In the following paragraph, we explain how we choose the parameters that have been used to simulate the 'working' data sets.

\paragraph{The working data sets}
We suppose that we are following up a sample of subjects where the goal is to evaluate how the growth of the weight and the height of the individuals of this population are jointly explained by the sex, the score of nutrition (Nscore) and the age. We randomly choose through a uniform distribution the score of nutrition between 20 and 50, and the age between 18 and 37, using the R software. All the computations in this paper are done using the R software. The subject's sex is also randomly chosen. The model under which we simulate the data sets is the following:\\
$n$ indicating the total number of subjects, for $i=1,\dots,n$

\begin{eqnarray}
\text{weight}_i &=&(\mathds{1}_{n_i},\text{sex}_i,\text{Nscore}_i,\text{age}_i)\beta_1+(\mathds{1}_{n_i},\text{Nscore}_i)\gamma_{1i}+\varepsilon_{1i}\nonumber\\
\text{height}_i &=&(\mathds{1}_{n_i},\text{sex}_i,\text{Nscore}_i,\text{age}_i)\beta_2+(\mathds{1}_{n_i},\text{Nscore}_i)\gamma_{2i}+\varepsilon_{2i}
\end{eqnarray}
with
\begin{equation}
\boldsymbol{\gamma}_i=\begin{pmatrix}\gamma_{1i}\\\gamma_{2i}\end{pmatrix}
\sim
\mathcal{N}\left(\boldsymbol{0},\bar{\boldsymbol{\Gamma}}\right),
\varepsilon_{1i}\sim\mathcal{N}\left(0,\sigma_1^2\text{I}_{n_i}\right),
\varepsilon_{2i}\sim\mathcal{N}\left(0,\sigma_2^2\text{I}_{n_i}\right),
\gamma_i\perp\varepsilon_{1i}\perp\varepsilon_{2i}
\end{equation}
The random effect related to the dependent variable 'weight' or 'height' is a vector composed by one random intercept and one random slope in the direction of the covariate 'Nscore'. The total number of observations is denoted by $N$.

We randomly choose $\beta_1,\beta_2,\sigma_1$ and $\sigma_2$ whose values are in the first column of Table~\ref{tab:CmlmeVSem1}. $\bar{\boldsymbol{\Gamma}}$ is also randomly chosen such that it is positive definite, with the following form:

\begin{equation}
\bar{\boldsymbol{\Gamma}}=
\begin{pmatrix}
\eta_1^2&\rho_\eta\eta_1\eta_2&\rho\eta_1\tau_1&\rho\eta_1\tau_2\\
\rho_\eta\eta_1\eta_2&\eta_2^2&\rho\eta_2\tau_1&\rho\eta_2\tau_2\\
\rho\eta_1\tau_1&\rho\eta_2\tau_1&\tau_1^2&\rho_\tau\tau_1\tau_2\\
\rho\eta_1\tau_2&\rho\eta_2\tau_2&\rho_\tau\tau_1\tau_2&\tau_2^2
\end{pmatrix}
\end{equation}
In order to have an almost strong correlation between the marginal random effects, we set $\rho=0.8$ and randomly choose all other parameters involved in the obtaining of $\bar{\boldsymbol{\Gamma}}$. Thus, the obtained $\bar{\boldsymbol{\Gamma}}$ is

\begin{equation}\label{eq:Gamma_vrai}
\bar{\boldsymbol{\Gamma}}=
\begin{pmatrix}
27.77&18.80&41.70&4.93\\
18.80&36.00&47.47&5.62\\
41.70&47.47&97.81&8.91\\
4.93&5.62&8.91&1.37
\end{pmatrix}
\end{equation}

\subsection{Estimates' performances}

One practical way to show the consistency of an estimator is by computing its Mean Square Error (MSE). If the MSE of an estimator is asymptotically null, this estimator converges in probability, and is then consistent. In this Section, we gradually simulate data sets with larger sizes , $(N,n)\in\{(600,50),(600,60),\dots,(1000,100),(1000,300),\dots,(15000,1000)\}$. We simulate one hundred data sets of each size and calculate the estimators' MSE using these data sets. This yields one hundred MSE for each size of dataset. This allows to compute the $95\%$ CI (confidence interval) along with the mean of the MSE (one hundred mse) obtained for the hundred data sets of the same size. The results are contained in Table~\ref{tab:mse1} and Table~\ref{tab:mse2}. The Table~\ref{tab:mse1} shows that the asymptotique in the longitudinal data requires not only $n\rightarrow\infty$ and $N\rightarrow\infty$, but also $N/n\rightarrow\infty$. This means that it requires a sufficient total number of observations, a sufficient number of levels for the grouping factor and a sufficient number of observations for each level of the grouping factor. For example, in the Table~\ref{tab:mse1}, when the total number of observations is $N=1000$, the MSE of $\bar{\boldsymbol{\Gamma}}$ is better for $n=100$, $0.47$ $(0.03-1.29)$, than for $n=300$, $0.69$ $(0.06-1.94)$. Observing both Table~\ref{tab:mse1} and Table~\ref{tab:mse2} it is clear that, as the number of observations increase, the MSE descends to 0. We can conclude that the estimators constructed in this paper are consistent. The estimation procedure discussing in this paper may therefore be named Consistent estimates for the Multivariate Linear Mixed-Effects model (Cmlme). The Cmlme acronym will be used in the remainder of the paper for a question of simplicity.

\begin{table}
 \scriptsize
\caption{{\bf Mean Square Error of estimators with $95\%$ CI estimated on $100$ replications for values of $n\in\{50,60,100,300\}$ and $N\in\{600,1000,3000\}$.}}
 \begin{center}
\begin{tabular}[c]{crrrr}
 \toprule
\bf Parameter & $n$ & $N=600$ & $N=1000$ & $N=3000$ \\
 \midrule 
\multirow{4}{1cm}{$\beta_1$}
& 50 & 2.43 (0.11 - 7.11) & 1.89 (0.22 - 4.89) & 1.02 (0.07 - 2.44) \\
& 60 & 2.57 (0.14 - 7.87) & 2.13 (0.26 - 5.55) & 0.77 (0.10 - 2.09) \\
& 100 & 2.27 (0.16 - 5.61) & 1.55 (0.14 - 5.17) & 0.71 (0.04 - 1.85) \\
& 300 & 3.16 (0.14 - 10.54) & 1.70 (0.10 - 4.55) & 0.51 (0.04 - 1.36) \\
\hline
\multirow{4}{1cm}{$\beta_2$}
& 50 & 5.50(0.09 - 16.35) & 3.26 (0.02 - 11.64) & 2.06 (0.09 - 5.98) \\
& 60 & 5.06 (0.12 - 15.09) & 3.22 (0.02 - 10.24) & 1.78 (0.10 - 5.62) \\
& 100 & 4.33 (0.03 - 13.17) & 2.37 (0.02 - 6.89) & 1.06 (0.02 - 3.60) \\
& 300 & 4.58 (0.18 - 15.92) & 2.43 (0.05 - 7.39) & 0.90 (0.04 - 2.88) \\
\hline
\multirow{4}{1cm}{$\sigma_1$}
& 50 & 0.03 (0.00 - 0.14) & 0.02 (0.00 - 0.09) & 0.00 (0.00 - 0.02) \\
& 60 & 0.04 (0.00 - 0.11) & 0.02 (0.00 - 0.08) & 0.00 (0.00 - 0.02) \\
& 100 & 0.03 (0.00 - 0.13) & 0.01 (0.00 - 0.06) & 0.00 (0.00 - 0.01) \\
& 300 & 0.05 (0.00 - 0.23) & 0.03 (0.00 - 0.13) & 0.00 (0.00 - 0.02) \\
\hline
\multirow{4}{1cm}{$\sigma_2$}
& 50 & 0.04 (0.00 - 0.15) & 0.03 (0.00 - 0.08) & 0.00 (0.00 - 0.02) \\
& 60 & 0.04 (0.00 - 0.18) & 0.03 (0.00 - 0.12) & 0.01 (0.00 - 0.03) \\
& 100 & 0.06 (0.00 - 0.18) & 0.03 (0.00 - 0.11) & 0.01 (0.00 - 0.03) \\
& 300 & 0.13 (0.00 - 0.45) & 0.04 (0.00 - 0.15) & 0.01 (0.00 - 0.05) \\
\hline
\multirow{4}{1cm}{$\boldsymbol{\bar{\Gamma}}$}
& 50 & 0.90 (0.12 - 2.41) & 0.62 (0.06 - 1.41) & 0.45 (0.04 - 1.14) \\
& 60 & 1.07 (0.06 - 2.64) & 0.68 (0.08 - 1.98) & 0.25 (0.03 - 0.66) \\
& 100 & 0.90 (0.05 - 2.40) & 0.47 (0.03 - 1.29) & 0.21 (0.02 - 0.71) \\
& 300 & 1.45 (0.19 - 4.39) & 0.69 (0.06 - 1.94) & 0.17 (0.02 - 0.57) \\
\bottomrule
\end{tabular}
 \end{center}
 \label{tab:mse1}
\end{table} 

\begin{table}
 \scriptsize
\caption{{\bf Mean Square Error of estimators with $95\%$ CI estimated on $100$ replications for values of $(n,N)\in\{(500,7000),(600,8000),(800,10000),(1000,15000)\}$.}}
 \begin{center}
\begin{tabular}[c]{ccccc}
 \toprule
\bf Parameter & $n=500,N=7000$ & $n=600,N=8000$ & $n=800,N=10000$ & $n=1000,N=15000$ \\
 \midrule 
\multirow{1}{1cm}{$\beta_1$}
& 0.22 (0.01 - 0.62) & 0.17 (0.01 - 0.48) & 0.16 (0.01 - 0.47) & 0.11 (0.00 - 0.32) \\
\hline
\multirow{1}{1cm}{$\beta_2$}
& 0.32 (0.00 - 1.01) & 0.34 (0.01 - 1.16) & 0.25 (0.02 - 0.67) & 0.21 (0.00 - 0.69) \\
\hline
\multirow{1}{1cm}{$\sigma_1$}
& 0.00 (0.00 - 0.00) & 0.00 (0.00 - 0.01) & 0.00 (0.00 - 0.00) & 0.00 (0.00 - 0.00) \\
\hline
\multirow{1}{1cm}{$\sigma_2$}
& 0.00 (0.00 - 0.01) & 0.00 (0.00 - 0.01) & 0.00 (0.00 - 0.01) & 0.00 (0.00 - 0.00) \\
\hline
\multirow{1}{1cm}{$\boldsymbol{\bar{\Gamma}}$}
& 0.09 (0.00 - 0.25) & 0.07 (0.00 - 0.19) & 0.06 (0.00 - 0.19) & 0.03 (0.00 - 0.09) \\
\bottomrule
\end{tabular}
 \end{center}
 \label{tab:mse2}
\end{table}


\subsection{Comparison with EM-based estimates}

In this Section, we compare the estimation procedure based on EM algorithm with the Cmlme. This comparison is performed regarding the accuracy of the estimates, whether or not the starting values of the two algorithms (EM and Cmlme) are naive or advised. We mean by naive starting values, those which are randomly chosen (without specific control), and by advised starting values those obtained by fitting separately each dimension of the bivariate model. The results of these marginal fitting are, indeed, the advised starting values for the bivariate model estimation procedure. The starting values are the same for both Cmlme and EM algorithms. The number of iteration required for convergence, for each algorithm, is also discussed. Our methodology consists in simulating thirty longitudinal data sets of size $(N=3000,n=300)$ and fit the model to each of these data sets using the EM algorithm and the Cmlme, respectively. This allows to compute both the $95\%$ CI and the empirical mean of the thirty estimates in each case (naive and advised starting values). The obtained results are in Table~\ref{tab:CmlmeVSem1} and Table~\ref{tab:CmlmeVSem2}. Table~\ref{tab:CmlmeVSem1} contains the empirical means of the estimates with their $95\%$ CI, and the minimum, the maximum and the average number of iterations. Table~\ref{tab:CmlmeVSem2} contains the empirical relative error of the estimators with their $95\%$ CI. These results show that in the case of naive initialization, the Cmlme estimators outperform the EM estimators. For example, the component of $\beta_1$ which is 14.00 is well estimated by Cmlme, $14.02$ $(13.27-14.45)$ with an empirical relative error of $0.02$ $(0.00-0.04)$, but poorly estimated by EM, $-2.05$ $(-4.70--0.40)$ with an empirical relative error of $1.14$ $(1.01-1.32)$. In the case of advised initialization, both Cmlme and EM algorithms perform well, but Cmlme converge faster (64 iterations in average) than EM (169 iterations in average). The number of iteration required by Cmlme whith advised initializations range from 48 to 89 and from 56 to 103 for naive initializations. The Cmlme with naive initialization therefore needs more iterations than Cmlme with advised initialization for converge. This is expected and may be explained by the fact that the advised initialization values contain some information from the data of interest, and the naive starting points do not.

\begin{table}[!h]
\scriptsize
 \tiny
 \caption{{\bf EM estimates compared with Cmlme estimates on the same data sets. Empirical estimates with their 95\% CI and the number of iteration required for convergence.}}
  \begin{center}
\rotatebox{90}{
\begin{tabular}[c]{crccrccccc}
   \toprule
& & \multicolumn{4}{c}{ \bf Naive initialization} & 
  \multicolumn{4}{c}{ \bf Advised initialization} \\
  \cmidrule(r){3-6}\cmidrule(r){7-10}
&  &\multicolumn{2}{c}{\bf Cmlme}&\multicolumn{2}{c}{\bf EM}
  &\multicolumn{2}{c}{\bf Cmlme}&\multicolumn{2}{c}{\bf EM}\\
  \cmidrule(r){3-4}\cmidrule(r){5-6}\cmidrule(r){7-8}\cmidrule(r){9-10}
 \bf Parameter &\bf Value &\bf Emp. Mean &\bf 95\% CI &\bf Emp. Mean &\bf 95\% CI
 &\bf Emp. Mean &\bf 95\% CI &\bf Emp. Mean &\bf 95\% CI\\
   \midrule
  \multirow{3}{0.1cm}{$\beta_1$}
  &50.67&50.79&$49.14-52.11$&13.47&$-76.70-43.37$&50.80&$49.15-52.12$&50.78&$49.09-52.01$ \\
  &-4.80&$-5.00$&$-8.39--3.66$&-4.79&$-8.08--3.42$&-5.02&$-8.39--3.66$&-4.98&$-8.38--3.65$ \\
  &14.00&14.02&$13.27-14.45$&-2.05&$-4.70--0.40$&14.02&$13.28-14.45$&14.02&$13.27-14.45$ \\
  &2.70&2.70&$2.66-2.72$&2.69&$2.66-2.72$&2.70&$2.66-2.72$&2.70&$2.66-2.72$ \\
  \hline
  \multirow{3}{0.1cm}{$\beta_2$}
  &13.20&13.65&$11.79-15.06$&-84.47&$-114.28--50.63$&13.65&$11.79-15.07$&13.68&$11.81-15.14$ \\
  &-2.80&-2.80&$-4.74--0.43$&-2.75&$-4.90-0.21$&-2.81&$-4.79--0.43$&-2.85&$-4.80--0.51$ \\
  &27.00&27.00&$26.87-27.10$&0.90&$-1.62-2.68$&27.00&$26.87-27.10$&27.00&$26.87-27.10$ \\
  &1.70&1.68&$1.64-1.71$&1.68&$1.64-1.71$&1.68&$1.64-1.71$&1.68&$1.64-1.71$ \\
  \hline
  $\sigma_1$
  &5.80&5.79&$5.62-5.92$&5.78&$5.64-5.98$&5.78&$5.64-5.92$&5.79&$5.65-5.94$ \\
   \hline
  $\sigma_2$
  &7.60&7.61&$7.34-7.74$&7.59&$7.33-7.73$&7.61&$7.34-7.73$&7.63&$7.36-7.73$ \\
  \hline \hline
  \multirow{3}{1cm}{Nbr. of iteration}
  &Min&56&-&63&-&48&-&14&- \\
  &Mean&71&-&109&-&64&-&169&- \\
  &Max&103&-&157&-&89&-&645&- \\
  \bottomrule
  \end{tabular}
  }
   \end{center}
   \label{tab:CmlmeVSem1}
\end{table}

\begin{table}[h!]
 \scriptsize
 \caption{{\bf EM estimates compared with Cmlme estimates on the same data sets. Empirical relative error of the estimates with their 95\% CI}}
  \begin{center}
\begin{tabular}[c]{crccrccccc}
   \toprule
& & \multicolumn{4}{c}{ \bf Naive initialization} & 
  \multicolumn{4}{c}{ \bf Advised initialization} \\
  \cmidrule(r){3-6}\cmidrule(r){7-10}
&  &\multicolumn{2}{c}{\bf Cmlme}&\multicolumn{2}{c}{\bf EM}
  &\multicolumn{2}{c}{\bf Cmlme}&\multicolumn{2}{c}{\bf EM}\\
  \cmidrule(r){3-4}\cmidrule(r){5-6}\cmidrule(r){7-8}\cmidrule(r){9-10}
 \bf Parameter &\bf Value &\bf R. Error &\bf 95\% CI &\bf R. Error &\bf 95\% CI
 &\bf R. Error &\bf 95\% CI &\bf R. Error &\bf 95\% CI\\
   \midrule
  \multirow{3}{0.1cm}{$\beta_1$}
  &50.67&0.01&$0.00-0.03$&0.73&$0.01-2.18$&0.01&$0.00-0.03$&0.01&$0.00-0.03$ \\
  &-4.80&$0.21$&$0.02-0.32$&0.21&$0.00-0.32$&0.21&$0.02-0.32$&0.21&$0.00-0.33$ \\
  &14.00&0.02&$0.00-0.04$&1.14&$1.01-1.32$&0.02&$0.00-0.04$&0.02&$0.00-0.04$ \\
  &2.70&0.00&$0.00-0.01$&0.00&$0.00-0.01$&0.00&$0.00-0.01$&0.00&$0.00-0.01$ \\
  \hline
  \multirow{3}{0.1cm}{$\beta_2$}
  &13.20&0.07&$0.00-0.14$&7.39&$4.19-9.40$&0.07&$0.00-0.14$&0.07&$0.00-0.14$ \\
  &-2.80&0.43&$0.00-0.84$&0.43&$0.02-1.07$&0.43&$0.00-0.84$&0.43&$0.00-0.81$ \\
  &27.00&0.00&$0.00-0.00$&0.96&$0.88-1.02$&0.00&$0.00-0.00$&0.00&$0.00-0.00$ \\
  &1.70&0.01&$0.00-0.02$&0.01&$0.00-0.02$&0.01&$0.00-0.02$&0.01&$0.00-0.03$ \\
  \hline
  $\sigma_1$
  &5.80&0.01&$0.00-0.02$&0.01&$0.00-0.03$&0.01&$0.00-0.02$&0.01&$0.00-0.03$ \\
   \hline
  $\sigma_2$
  &7.60&0.01&$0.00-0.02$&0.01&$0.00-0.04$&0.01&$0.00-0.02$&0.01&$0.00-0.02$ \\
   \bottomrule
  \end{tabular}
   \end{center}
   \label{tab:CmlmeVSem2}
\end{table}

The empirical mean of the random effects covariance matrix, $\bar{\Gamma}$, is well estimated with advised initializations:

\begin{equation}
\bar{\Gamma}_{\text{adv}}^{\text{Cmlme}}=
\begin{pmatrix}
25.74 & 16.34 & 35.75 & 4.50 \\ 
16.34 & 34.83 & 43.97 & 5.37 \\ 
35.75 & 43.97 & 82.44 & 8.43 \\ 
4.50 & 5.37 & 8.43 & 1.32
\end{pmatrix}
,\quad
\text{with}
\quad
\sigma_{\Gamma_\text{adv}}^{\text{Cmlme}}=
\begin{pmatrix}
5.73 & 2.74 & 4.45 & 0.61 \\ 
2.74 & 2.59 & 3.96 & 0.44 \\ 
4.45 & 3.96 & 12.07 & 0.80 \\ 
0.61 & 0.44 & 0.80 & 0.11
\end{pmatrix}
\end{equation}

and

\begin{equation}
\bar{\Gamma}_{\text{adv}}^{\text{EM}}=
\begin{pmatrix}
23.66 & 16.45 & 32.60 & 4.45 \\ 
16.45 & 34.82 & 43.39 & 5.39 \\ 
32.61 & 43.39 & 75.16 & 8.66 \\ 
4.45 & 5.39 & 8.65 & 1.31
\end{pmatrix}
,\quad
\text{with}
\quad
\sigma_{\Gamma_\text{adv}}^{\text{EM}}=
\begin{pmatrix}
7.49 & 2.71 & 5.29 & 0.76 \\ 
2.71 & 2.59 & 3.71 & 0.45 \\ 
5.30 & 3.71 & 13.36 & 0.80 \\ 
0.76 & 0.45 & 0.80 & 0.11 
\end{pmatrix}
\end{equation}

$\sigma_{\Gamma_\text{adv}}^{\text{Cmlme}}$ and $\sigma_{\Gamma_\text{adv}}^{\text{EM}}$ contain the standard deviations of the entries of $\bar{\Gamma}_{\text{adv}}^{\text{Cmlme}}$ and $\bar{\Gamma}_{\text{adv}}^{\text{EM}}$, respectively. It seems that the empirical standard deviation of the higher entries of $\bar{\Gamma}$ are bigger with EM than with Cmlme. For example, the standard deviation of $\bar{\Gamma}_{11}=27.77$ is $7.49$ for EM, but $5.73$ for Cmlme. Same remark about the standard deviations of $\bar{\Gamma}_{31}$, $\bar{\Gamma}_{32}$ and $\bar{\Gamma}_{33}$, comparing EM and Cmlme. This may be explained by the fact that Cmlme estimators are more consistent than EM estimators. In the case of naive initializations, Cmlme provides a well estimated empirical mean of $\bar{\Gamma}$, when the estimated $\bar{\Gamma}$ provided by EM is very bad (we choose not to show it here).
\newpage
\begin{equation}
\bar{\Gamma}_{\text{naiv}}^{\text{Cmlme}}=
\begin{pmatrix}
24.76 & 16.18 & 34.43 & 4.47 \\ 
16.18 & 34.77 & 43.91 & 5.35 \\ 
34.43 & 43.91 & 80.67 & 8.40 \\ 
4.47 & 5.35 & 8.40 & 1.32
\end{pmatrix}
,\quad
\text{with}
\quad
\sigma_{\Gamma_\text{naiv}}^{\text{Cmlme}}=
\begin{pmatrix}
7.58 & 2.67 & 6.76 & 0.61 \\ 
2.67 & 2.54 & 3.91 & 0.42 \\ 
6.76 & 3.91 & 13.37 & 0.80 \\ 
0.61 & 0.42 & 0.80 & 0.11
\end{pmatrix}
\end{equation}

$\bar{\Gamma}_{\text{naiv}}^{\text{Cmlme}}$ compared to $\bar{\Gamma}_{\text{adv}}^{\text{Cmlme}}$ and $\sigma_{\Gamma_\text{naiv}}^{\text{Cmlme}}$ compared to $\sigma_{\Gamma_\text{adv}}^{\text{Cmlme}}$ show a slight difference which reveals a tiny sensibility of Cmlme to the starting values. This may be corrected by doing more than one evaluation of the model's deviance.

For all the simulation studies, we use the ML deviance criterion (Equation~\ref{eq:devianceML}) and have minimized it using the nlminb function under R software. Thus, the estimates obtained are from the ML estimators. In this paper, we do not provide an application of REML estimates.


\section{Application on malaria dataset}

\subsection{Data description}
The data that we analyze here come from a study which was conducted in 9 villages (Avam\'e centre, Gb\'edjougo, Houngo, Anavi\'e, Dohinoko, Gb\'etaga, Tori Cada Centre, Z\'eb\`e and Zoungoudo) of Tori Bossito area (Southern Benin), where \textit{P. falciparum} is the commonest species in the study area ($95\%$) \cite{djenontin2010culicidae} from June $2007$ to January $2010$. The aim of this study was to evaluate the determinants of malaria incidence in the first months of life of child in Benin.

Mothers ($n=620$) were enrolled at delivery and their newborns were actively followed-up during the first year of life. One questionnaire was conducted to gather information on women's characteristics (age, parity, use of Intermittent Preventive Treatment during pregnancy (IPTp) and bed net possession) and on the course of their current pregnancy. After delivery, thick and thin placental blood smears were examined to detect placental infection defined by the presence of asexual forms of \textit{P. falciparum}. Maternal peripheral blood as well as cord blood were collected. At birth, newborn's weight and length were measured and gestational age was estimated.

During the follow-up of newborns, axillary temperature was measured weekly. In case of temperature higher than $37.5^\circ$C, mothers were told to bring their children to the health center where a questionnaire was filled out. A rapid diagnostic test (RDT) for malaria was performed and a thick blood smear (TBS) made. Symptomatic malaria cases, defined as fever ($>37.5^\circ$C) with TBS and/or RDT positive, were treated with an artemisinin-based combination. Systematically, TBS were made every month to detect asymptomatic infections. Every three months, venous blood was sampled to quantify the level of antibody against malaria promised candidate vaccine antigens. Finally, the environmental risk of exposure to malaria was modeled for each child, derived from a statistical predictive model based on climatic, entomological parameters, and characteristics of children's immediate surroundings. Also every $3$ months (at $3, 6, 9, 12, 15, 18$ months 130 of age), infant blood samples were collected.

Concerning the antibody quantification, two recombinant \textit{P. falciparum} antigens where used to perform IgG subclass (IgG1 and IgG3) antibody. Recombinants antigens MSP2 (3D7 and FC27) were from La Trobe University \citep{anders2010recombinant, mccarthy2011phase}. GLURP-R0 (amino acids 25-514, F32 strain) and GLURP-R2 (amino acids 706-1178, 140 F32 strain) were also expressed. The antibodies were quantified in plasma at different times and ADAMSEL FLPb039 software (\url{http:// www.malariaresearch.eu/content/software}) was used to analyze automatically the ELISA optical density (OD) leading to antibody concentrations in ($\mu$g/mL).

In this paper, we use some of the data and we rename the proteins used in the study, for reasons of the protection of these data. Thus, the proteins we use here, are named A1, A2, B and C, and are related to the antigens IgG1 and IgG3 as mentioned above. Information contained in the multivariate longitudinal dataset of malaria are described in the Table~\ref{tab:longiVariables}, where Y denotes an antigen which is one of the following:
\begin{equation}\label{eq:antigens}
\text{IgG1}\_\text{A1},\text{IgG3}\_\text{A1}, \text{IgG1}\_\text{A2},\text{IgG3}\_\text{A2}, \text{IgG1}\_\text{B},\text{IgG3}\_\text{B}, \text{IgG1}\_\text{C},\text{IgG3}\_\text{C}
\end{equation}

\begin{table}[h!]
\caption{{\bf  Variables present in the empirical dataset}}
 \centering
\begin{tabular}[c]{lll}
 \toprule
\bf N$^\circ$ &\bf Variable & \bf Description \\
 \midrule 
1 & id & Child ID \\
2 & conc.Y & concentration of Y \\
3 & conc$\_$CO.Y & Measured concentration of Y in the umbilical cord blood \\
4 & conc$\_$M3.Y & Predicted concentration of Y in the child's peripheral blood at 3 months \\
5 & ap & Placental apposition \\
6 & hb & Hemoglobin level \\
7 & inf$\_$trim & Number of malaria infections in the previous 3 months \\
8 & pred$\_$trim & Quarterly average number of mosquitoes child is exposed to \\
9 & nutri$\_$trim & Quarterly average nutrition scores \\
\bottomrule
\end{tabular}
 \label{tab:longiVariables}
\end{table} 

\subsection{Data analysis}

The aim of the analysis of these data is to evaluate the effect of the malaria infection on the child's immune acquisition (against malaria). Since the antigens which characterize the child's immune status interact together in the human body, we analyze the characteristics of the joint distribution of these antigens, conditionally to the malaria infection and other factors of interest. The dependent variables are then provided by conc.Y (Table~\ref{tab:longiVariables}) which describes the level of the antigen Y in the children at 3, 6, 9, 12, 15 and 18 months. All other variables in the Table~\ref{tab:longiVariables} are covariates. We then have 8 dependent variables which describe the longitudinal profile (in the child) of the proteins listed in Equation~\ref{eq:antigens}.

To illustrate the stability of our approach, we are fitting here a bivariate model to the data, with $\text{IgG1}\_\text{A1}$ and $\text{IgG3}\_\text{A2}$ as dependent variables:

\begin{eqnarray}
	\text{conc.IgG}1\_\text{A}1&=&(\mathds{1},\text{ap},\text{conc$\_$CO.IgG1}\_\text{A}1,\text{conc$\_$M3.IgG1}\_\text{A}1,\text{hb},\text{inf$\_$trim},\nonumber\\
	&&\text{pred$\_$trim},\text{nutri$\_$trim})\beta_1
	+(\mathds{1},\text{inf$\_$trim})\gamma_{1} +\varepsilon_{1}\nonumber\\\nonumber \\
	\text{conc.IgG}3\_\text{A}2&=&(\mathds{1},\text{ap},\text{conc$\_$CO.IgG3}\_\text{A}2,\text{conc$\_$M3.IgG3}\_\text{A}2,\text{hb},\text{inf$\_$trim},\nonumber\\
	&&\text{pred$\_$trim},\text{nutri$\_$trim})\beta_2
	+(\mathds{1},\text{inf$\_$trim})\gamma_{2} +\varepsilon_{2},\label{eq:modelA1B}
\end{eqnarray}
with
\begin{equation}
\boldsymbol{\gamma}=(\gamma_1^\top,\gamma_2^\top)^\top
\sim\mathcal{N}\begin{pmatrix}\bold{0} & \boldsymbol{\bar{\Gamma}}\end{pmatrix},
\quad
\boldsymbol{\varepsilon}=(\varepsilon_1^\top,\varepsilon_2^\top)^\top
\sim\mathcal{N}\left(\bold{0},\begin{pmatrix}\sigma_1^2\text{I} & 0 \\ 0 & \sigma_2^2\text{I}\end{pmatrix}\right).\label{eq:modelA1Bbis}
\end{equation}

Our strategy is to 1) fit the model to the data by running the Cmlme algorithm using 25 different naive starting points and 2) retain the estimates related to the best likelihood (the minimum of the 25 deviances) as the true parameters and compute the estimators' MSE using the 24 others estimates. This may allows to evaluate how much the Cmlme algorithm is sensitive to the starting points. The results are contained in the Table~\ref{tab:empiricalDataEstimation}.

\begin{table}[h!]
\caption{{\bf Empirical data analysis.}}
 \centering
\begin{tabular}[c]{lrrrr}
 \toprule
& \multicolumn{4}{c}{\bf Response variables} \\
 \cmidrule(r){2-5}
&\multicolumn{2}{c}{\bf $\text{conc.IgG}1\_\text{A}1$} &
\multicolumn{2}{r}{\bf $\text{conc.IgG}3\_\text{A}2$}\\
\cmidrule(r){2-3}\cmidrule(r){4-5}
 \bf Covariates&\bf Estimate& \bf MSE &\bf Estimate& \bf MSE\\
 \midrule 
Intercept & $0.609$ & $9.05\times 10^{-5}$ & $-1.626$ & $3.18\times 10^{-5}$ \\
ap & $-0.093$ & $1.06\times 10^{-5}$ &$-0.337$& $1.04\times 10^{-6}$ \\
$\text{conc$\_$CO.IgG1}\_\text{A}1$ & $0.160$ & $1.68\times 10^{-6}$ & $-$ & $-$ \\
$\text{conc$\_$M3.IgG1}\_\text{A}1$ & $0.148$ & $9.85\times 10^{-6}$ & $-$ & $-$  \\
$\text{conc$\_$CO.IgG3}\_\text{A}2$ & $-$ & $-$ & $0.047$ & $6.44\times 10^{-7}$ \\
$\text{conc$\_$M3.IgG3}\_\text{A}2$ & $-$ & $-$ & $0.155$ & $2.22\times 10^{-7}$ \\
hb & $ -0.162$ & $3.22\times 10^{-7}$ & $-0.345$ & $1.35\times 10^{-7}$ \\
$\text{inf$\_$trim}$ & $0.369$ & $1.89\times 10^{-6}$ & $0.696$ & $5.09\times 10^{-7}$ \\
$\text{pred$\_$trim}$ & $-0.003$ & $5.25\times 10^{-8}$ & $ 0.017$ & $1.49\times 10^{-8}$ \\
$\text{nutri$\_$trim}$ & $0.024$ & $5.81\times 10^{-6}$ & $0.115$ & $3.71\times 10^{-5}$ \\
\hline
\multirow{1}{1.6cm}{$\sigma_1$ and $\sigma_2$}
& $1.395$ & $4.96\times 10^{-6}$ & 1.626 & $2.42\times 10^{-5}$ \\
\bottomrule
\end{tabular}
 \label{tab:empiricalDataEstimation}
\end{table} 

Based on these results, the influence of the starting points on the Cmlme algorithm is very low (see the MSE in Table~\ref{tab:empiricalDataEstimation}). The estimated random effects covariance matrix is

\begin{equation}
\boldsymbol{\Gamma}=
\begin{pmatrix}
0.58 & -0.13 & 0.74 & -0.36 \\ 
-0.13 & 0.23 & -0.39 & 0.37 \\ 
0.74 & -0.39 & 0.94 & -0.24 \\ 
-0.36 & 0.37 & -0.24 & 0.34
\end{pmatrix}
\end{equation}

with an MSE of $0.0095$.


\section{Conclusion}
In the context of fitting multivariate linear mixed-effects model having homoscedastic dimensional residuals, we have suggested ML and REML estimation strategies by profiling the model's deviance and Cholesky factorizing  the random effect covariance matrix. This approach can be considered as the generalization of the approach used by~\cite{bates2014lme4} in the R software lme4 package. Through extensive simulation studies, we have illustrated that the present approach outperforms the traditional EM estimates and provides estimates that are consitent for both fixed effects and variance components. Another interesting characteristic is its robustness relative to the initial value of the optimization procedure which can be randomly chosen without affecting the estimation results. Furthermore, the profiled ML or REML criterion's optimization can be easily and rapidly performed using an existing optimizer in the R software. Further considerations of this approach may include heteroscedastic residuals as well as residuals correlated with random effects, where the theoretical consistency of the resulting estimators will be demonstrated.
\newpage
\section*{References}

\bibliography{\jobname}

\end{document}